\newtheorem{theorem}{Theorem}
\newtheorem{lemma}{Lemma}
\newtheorem{result}{Result}
\newtheorem{definition}{Definition}
\newenvironment{proof}{\noindent {\bf Proof:\,\ }}{\hfill\mbox{$\Box$}\smallskip}
\newtheorem{observation}{Observation}
\newtheorem{remark}{Remark}
\newtheorem{corollary}{Corollary}
\newcommand{\remove}[1]{}
\title{Partial Domination in Some Geometric Intersection Graphs and Some Complexity Results\footnote{A preliminary version of this paper appeared in the proceedings of CALDAM 2025.}}
\author{Madhura Dutta$^{1,2}$\footnote{Supported by DST INSPIRE PhD Fellowship} \and 
Anil Maheshwari$^3$\footnote{Research supported by NSERC} \and  
Subhas C. Nandy$^4$ \and Bodhayan Roy$^5$}
\date{$^1$TCG CREST, Kolkata-700091, India; {\tt madhura1998kkg@gmail.com}\\
$^2$Academy of Scientific and Innovative Research (AcSIR), Ghaziabad-201002, India \\
$^3$School of Computer Science, Carleton University, Ottawa, Canada; {\tt anil@scs.carleton.ca} \\
$^4$Ramkrishna Mission Vivekananda Centenary College, Kolkata-700118, India; {\tt subhas.c.nandy@gmail.com} \\
$^5$Indian Institute of Technology Kharagpur, West Bengal-721302, India; {\tt bodhayan.roy@gmail.com}}
\begin{document}
\maketitle

\begin{abstract}
{\em Partial domination problem} is a generalization of the {\em minimum dominating set problem} on graphs. Here, instead of dominating all the nodes, one asks to dominate at least a fraction of the nodes of the given graph by choosing a minimum number of nodes. For any real number $\alpha\in(0,1]$, $\alpha$-partial domination problem can be proved to be NP-complete for general graphs. In this paper, we define the {\em maximum dominating $k$-set} of a graph, which is polynomially transformable to the partial domination problem. The existence of a graph class for which the minimum dominating set problem is polynomial-time solvable, whereas the partial dominating set problem is NP-hard, is shown. We also propose polynomial-time algorithms for the maximum dominating $k$-set problem for the unit and arbitrary interval graphs. The problem can also be solved in polynomial time for the intersection graphs of a set of 2D objects intersected by a straight line, where each object is an axis-parallel unit square, as well as in the case where each object is a unit disk. Our technique also works for axis-parallel unit-height rectangle intersection graphs, where a straight line intersects all the rectangles. Finally, a parametrized algorithm for the maximum dominating $k$-set problem in a disk graph where the input disks are intersected by a straight line is proposed; here the parameter is the ratio of the diameters of the largest and smallest input disks.
\end{abstract}

{\bf Keywords:} Dominating set, partial domination, geometric intersection graphs.

\section{Introduction}

The {\em minimum dominating set} problem in a simple undirected graph $G=(V,E)$ is well-studied in graph theory. The objective is to choose a set $S \subseteq V$ of minimum cardinality such that for every node $v \in V\setminus S$, there exists at least one member $v' \in S$ satisfying $v \in N(v')$, where $N(v')$ is the set of nodes adjacent to $v'$ in the graph $G$. The {\em minimum dominating set} problem is an NP-complete problem \cite{garay}. The hardness status remains unchanged for bipartite graphs and split graphs \cite{bertossi}. However, a $(1+\log |V|)$-factor approximation algorithm exists as the dominating set problem is a special instance of the set cover problem \cite{johnson}. It admits a PTAS for planar graphs \cite{baker} and unit disk graphs \cite{hunt}. The problem is linear-time solvable for trees using dynamic programming \cite{ref2, 12} and for interval graphs, assuming the endpoints of the intervals corresponding to the nodes are in sorted order \cite{refx}. For the class of geometric intersection graphs of axis-parallel rectangles intersecting a straight line and some of its sub-classes, the domination problem has been studied in \cite{new1}. For a comprehensive survey on the dominating set problem, see \cite{b,a}.

Recently, in the context of communication networks, the problem of partial domination has become important. Usually, it is difficult to cover the entire network with limited resources. So, dominating (covering) a large subset using limited resources becomes important. In partial domination of a graph $G=(V,E)$, a real number $\alpha \in (0,1]$ is given; the objective is to find a set $S\subseteq V$ of minimum cardinality such that the size of the closed neighborhood of the set $S$, denoted by $N[S]=S \cup N(S)$, is greater than or equal to $\alpha$ times the size of the set $V$ (i.e., $|N[S]| \geq \alpha|V|$) \cite{ref1,ref3}, where $N(S)=\{u| ~ u \in V\setminus S,  ~\text{and} ~(u,v) \in E ~\text{for some}~ v\in S\}$. For $\alpha=1$, $\alpha$-partial dominating set is the dominating set of the graph. The theoretical study of partial domination is rarely observed in the literature. In \cite{ref1}, methods for evaluating the $\frac{1}{2}$-partial dominating set for some graph classes, like cycles, paths, grids, etc., are suggested. Some practical applications of partial domination are also available in  \cite{ref1}. A parameterized algorithm for the partial dominating set problem in arbitrary graphs is given in \cite{ref1}, using the results of \cite{kneis}. Sub-exponential algorithms for this problem for planar graphs and apex-minor-free graphs are proposed in \cite{new2} and \cite{new3}, respectively.

Several variants of the dominating set problem have been studied \cite{das,domke,hedet,lan,ruba}. Here, we introduce the {\em maximum dominating $k$-set} problem for a graph. The objective is to dominate a maximum number of nodes in the graph with a suitably chosen subset of $k$ nodes, where $k \geq 0$ is an integer. Naturally, the maximum dominating $k$-set problem has many applications where a limited number of facilities are available, and the goal is to use them to cover as many locations as possible. This is similar to the maximum coverage problem, a variant of the {\em set cover problem}; here, we must choose at most k sets whose union has the maximum cardinality. The notion of a max $k$-vertex cover, where the goal is to cover the maximum number of edges in a graph using $ k$ vertices, is also analogous. Parametrized algorithms for the maximum $k$-set cover have been studied in \cite{jn13}.

Another similar concept is that of a budgeted dominating set, where the goal is to maximize the number of dominated vertices using a vertex subset of size less than or equal to the given budget \cite{jn7,jn8}. This has been studied in uncertain graphs; some versions of budgeted domination, namely, connected and partial-connected versions, have also been studied \cite{jn10,jn11,jn9,jn12}. For unit disk graphs, parallel algorithms for minimum general partial dominating set and maximum budgeted dominating set have been given in \cite{jn14}.

In this connection, it is worth noting that studying partial optimal solutions to specific optimization problems is an active area of research. In social networks, the maximum $k$-cover problem was studied in \cite{KKD}, where the objective was to choose an influential subset of $k$ individuals to reach (cover) as many individuals as possible. They show that the problem is NP-hard and propose an approximation algorithm. In the context of geometric optimization, an $O(kn^2)$ time and $O(kn)$ space algorithm is available for the maximum hitting $k$-set problem for unit intervals where the objective is to place $k$ points that hit the maximum number of unit intervals among $n$ unit intervals that are arranged on a real line  \cite{ahn}. In \cite{De}, modified {\em local search} based PTASes have been proposed for computing the maximum hitting $k$-set problem for the intersection graphs of homothetic copies of convex objects, which includes disks, squares of arbitrary sizes, regular $m$-gons, translated and scaled copies of a convex object, etc. 

\subsection{Preliminaries}
\begin{definition}
Given a graph $G=(V,E)$ and a real number $\alpha \in (0,1]$, an {\em $\alpha$-partial dominating set} is a set $S\subseteq V$ such that $|N[S]|\geq \alpha\cdot |V|$. The cardinality of an $\alpha$-partial dominating set of minimum size is called the {\em $\alpha$-partial domination number} of the graph, and is denoted by $\gamma_{\alpha}$, i.e., 
$\gamma_{\alpha}=\min\{|S|:S\subseteq V \text{ with }|N[S]|\geq \alpha\cdot |V| \}$.
\end{definition}

\begin{definition}\label{defkset}
Given a graph $G=(V,E)$ and an integer $k$ ($\leq |V|$), a {\em maximum dominating $k$-set} is a set $S\subseteq V$ with $|S|=k$ and $|N[S]|$ is maximized, i.e., for all $S'\subseteq V$ with $|S'|=k$, we have $|N[S]|\geq |N[S']|$. Here,  $|N[S]|$  is referred to as the {\em maximum dominated neighborhood size of a $k$-set}.
\end{definition}

\begin{definition}
A {\em geometric intersection graph} $G=(V,E)$ is a graph where each vertex $v_i\in V$ corresponds to a geometric object, and an edge $(v_i,v_j) \in E$ if and only if the objects corresponding to the vertices $v_i$ and $v_j$ have non-empty intersection. 
\end{definition}
For example, an interval graph is a geometric intersection graph of a set of intervals arranged on a straight line, where two vertices are adjacent if and only if the corresponding intervals overlap. Many hard problems in graph theory can be solved efficiently for the geometric intersection graph when the geometric objects corresponding to that graph satisfy specific geometric properties.

\subsection{New results} 
In this paper, we study the maximum dominating $k$-set problem on graphs, where the objective is to dominate the maximum number of nodes in the graph using a suitably chosen subset of $k$ nodes. We present the following results.
\begin{itemize}  
\item We show that the maximum dominating $k$-set problem and the partial domination problem are polynomially equivalent. 
\item The decision version of the partial dominating set problem (for a given $\alpha\in (0,1]$) is shown to be NP-hard for arbitrary graphs using a polynomial time reduction from the dominating set problem. 
\item We show the existence of a graph class for which domination is polynomial time solvable, but partial domination is NP-hard. 
\item We provide polynomial-time algorithms for the maximum dominating $k$-set problem for different types of geometric intersection graphs. 
\begin{itemize}
\item For unit interval graphs, our proposed algorithm runs in $O(nk\log n)$ time using $O(n)$ space. Using this, we propose an algorithm for arbitrary interval graphs that runs in $O(n^2k)$ time. \\
\item We propose an algorithm for the maximum dominating $k$-set problem for a unit square intersection graph, where the input squares are axis-parallel and intersected by a straight line; the time complexity of this algorithm is a polynomial function of the number of input squares. This method is extended to solve the same problem for unit-height rectangles intersected by a line (provided the line meets certain criteria) and unit disks intersected by a line in polynomial time. \\
\item We present an algorithm for disk graphs when a line intersects all the disks. The time complexity of this algorithm depends on the ratio of the diameters of the largest and smallest disks.
\end{itemize}
\end{itemize}

Our interest in studying the partial dominating set problem for certain restricted classes of geometric intersection graphs stems from the fact that polynomial-time algorithms exist for the dominating set problem in these graph classes (see \cite{refx}, \cite{new1}). In \cite{new3,new2}, subexponential algorithms for the partial domination problem are proposed for planar graphs and apex minor-free graphs. For graphs with bounded local treewidth, these algorithms run in polynomial time. Note that interval graphs do not have bounded local treewidth, need not be planar, and can contain apex graphs as minor. The same applies to intersection graphs of $ d$-dimensional ($d\geq 2$) geometric objects. This paper focuses on the partial domination problem for interval graphs, the intersection graph of unit squares intersected by a straight line, the intersection graph of unit height rectangles intersected by a straight line, the unit disk graph where a straight line intersects all unit disks, and the disk graph where a straight line intersects all the disks.

\subsection{Organization}
In the next section, we prove that the partial domination problem is NP-hard for arbitrary graphs and show that it is equivalent to the maximum dominating $k$-set problem in the context of computational complexity. Then, the existence of the graph class for which the domination problem is in complexity class P, whereas partial domination is NP-complete, is shown. The maximum dominating $k$-set problem for unit interval graphs is studied in Section \ref{interval}. We extend our technique to work for the intersection graph of arbitrary-sized intervals in Section \ref{arbint}. In Sections \ref{square} and \ref{rectangle}, we explore the problem for the unit square intersection graph and the unit-height rectangle intersection graph, respectively, where a line intersects all the geometric objects (squares and rectangles, respectively) corresponding to the graph nodes. 
In Section \ref{di}, we study the problem for the unit disk graph and the disk graph, respectively, where a straight line intersects all the input unit disks and disks. Finally, Section \ref{conclusion} gives some concluding remarks.

\section{Complexity results for the partial domination problem}
For a given graph $G=(V,E)$, a real number $\alpha \in (0,1]$ and an integer $\kappa\leq |V|$, the partial domination problem asks whether there exists an $\alpha$-partial dominating set of $G$ of size $\leq \kappa$.

\subsection{NP-hardness}
\begin{theorem} \label{t1}
The decision version of the partial dominating set problem is NP-complete.
\end{theorem}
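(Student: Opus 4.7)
I would prove this in two standard stages: first show membership in NP, then establish NP-hardness by a polynomial reduction from the classical minimum dominating set decision problem, whose NP-completeness is cited in the introduction \cite{garay}.

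For NP membership, any subset $S\subseteq V$ of size at most $\kappa$ is a polynomial-size certificate. The verifier checks $|S|\leq\kappa$, computes the closed neighborhood $N[S]=S\cup\bigcup_{v\in S}N(v)$ by a single scan of the adjacency lists of the vertices in $S$, and confirms $|N[S]|\geq\alpha|V|$. All of these steps run in $O(|V|+|E|)$ time.

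For NP-hardness, my plan exploits the fact that the partial dominating set problem receives the triple $(G,\alpha,\kappa)$ as input. Given any dominating set instance $(G,\kappa)$, I would simply output the partial dominating set instance $(G,\kappa,\alpha=1)$. Since $|N[S]|\geq 1\cdot|V|$ is equivalent to $N[S]=V$, a $1$-partial dominating set is precisely an ordinary dominating set, so the two instances have the same answer and the reduction is polynomial (indeed linear). Combined with membership in NP, this yields NP-completeness.

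The only real obstacle would appear if one insisted on NP-hardness for each fixed $\alpha\in(0,1)$ rather than with $\alpha$ in the input. In that stronger version one needs to augment $G$ by extra structure---for instance, a pendant at each vertex of $V(G)$ together with a carefully chosen number $M$ of isolated dummy vertices---and pick $M$ so that the threshold $\alpha|V(G')|$ simultaneously (i) admits every dominating set of $G$ as an $\alpha$-partial dominating set of $G'$ of the same size, and (ii) forces every $\alpha$-partial dominating set of $G'$ of the declared size to completely dominate $V(G)$ through a subset of $V(G)$. Balancing (i) and (ii) with a single integer $M$ is the technical crux of the stronger claim; the theorem as stated, however, only requires the trivial $\alpha=1$ reduction above.
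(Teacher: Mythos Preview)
Your proof is correct for the theorem as literally stated, but the paper actually aims for and proves the stronger statement you flag in your last paragraph: NP-hardness for every fixed $\alpha\in(0,1]$, not merely with $\alpha$ in the input (this intent is explicit in the abstract and in the ``New results'' list). So while your $\alpha=1$ reduction suffices for the bare theorem, it does not capture what the paper is after, and the subsequent remark that ``the partial domination problem is NP-complete for those graph classes where the dominating set problem is NP-hard'' would not follow from your argument alone.

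More interestingly, the construction the paper uses for the fixed-$\alpha$ case is much simpler than the pendant-plus-dummy scheme you anticipate. Given $(G,\kappa)$ and any $\alpha\in(0,1]$, the paper just adjoins $\lfloor n/\alpha\rfloor-n$ \emph{isolated} vertices to $G$, obtaining $G'$ with $|V(G')|=\lfloor n/\alpha\rfloor$, so that $\alpha|V(G')|\geq n$ forces at least $n$ vertices to be dominated. Since the only way to dominate $n$ vertices efficiently is to dominate all of $V(G)$, any $\alpha$-partial dominating set $S'$ of $G'$ of size at most $\kappa$ yields a dominating set of $G$ of size at most $\kappa$ (after swapping any isolated vertices in $S'$ for the as-yet-undominated vertices of $G$, which can only help). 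No pendants are needed, and there is no delicate balancing of two constraints via a parameter $M$; the single choice $|V_2|=\lfloor n/\alpha\rfloor-n$ already makes the threshold exactly $n$. Your intuition that extra structure is required was correct, but the paper's isolated-vertex padding is the minimal such structure.
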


\begin{proof}
The problem is in NP, since given a subset $S \subseteq V$, we can count the number of nodes in $V$ that are dominated by members in $S$ in polynomial time. We now prove the NP-hardness by providing a polynomial-time reduction from the dominating set problem of a graph. 

Given a graph $G=(V,E)$, an integer $\kappa$, and a real number $\alpha \in (0,1]$, we construct a graph $G'=(V',E')=(V_1\cup V_2, E')$, where $V_1=V$ and $V_2$ consists of $(\lfloor n/\alpha\rfloor - n)$ isolated nodes, where $n=|V|$. Note that $|V'|=|V_1\cup V_2|=|V_1|+|V_2|=\lfloor n/\alpha\rfloor$. The edge set $E'$ consists of the edges present in $G$ only, i.e., $E'=E$.

We first show that the minimum dominating set for the graph $G$ can be obtained by executing the algorithm for the partial domination problem on the graph $G'$ with parameters $\alpha$ and $\kappa$. We run the algorithm for the partial domination on the graph $G'$ with parameters $\alpha$ and $\kappa$. Let it return a set $S'$ with $|S'|\leq \kappa$, and  $|N[S']|\geq\alpha\cdot|V'|=\alpha\cdot\lfloor n/\alpha\rfloor\geq n$ (since $|N[S']|$ is an integer). Now, consider the following cases:

\begin{description}
\item[Case-1 - $S'\subseteq V_1$:] Here,  $S'$ itself becomes a dominating set of $G$.

\item[Case-2 - $S'=S_1'\cup S_2'$, where $S_1' \subseteq V_1$ and $S_2'\subseteq V_2$:] Since $S_1'\cap S_2'=\emptyset$ and $N[S_1']\cap N[S_2']=\emptyset$, we have $|N[S']|=|N[S_1'\cup S_2']|=|N[S_1']|+|N[S_2']|=|N[S_1']|+|S_2'|$.
We already have $|N[S']|\geq n \implies |N[S_1']|\geq n-|S_2'|$. Therefore, at most $|S_2'|$ nodes of $G$ are not dominated by $S_1'$ as $N[S_1']\cap N[S_2']=\emptyset$.
We now take all the nodes in $S_3=V \setminus N[S_1']$, and report $S_1'\cup S_3$ as the dominating set for the graph $G$. As the size of $S_3= V\setminus N[S_1']$ is at most $|S_2'|$, the size of the reported dominating set is at most $|S'|$.

\item[Case-3 - $S'\subseteq V_2$:] All the nodes in $V_2$ are isolated and since $|N[S']|\geq n$, we have $|S'|\geq n$. So, we can take all the members of $V$ instead of $S'$, which will be an $\alpha$-partial dominating set for the graph $G'$, as well as a dominating set of $G$ of size at most $|S'|$.
\end{description}
Conversely, if $D$ is a dominating set of $G$, then $D$ would be an $\alpha$-partial dominating set of $G'$ by construction.  
\end{proof}

The dominating set problem is a well-known NP-hard problem for general graphs. Thus, Theorem \ref{t1} says that the partial domination problem 
is NP-complete for those graph classes where the dominating set problem 
is NP-hard.

\subsection{Computational equivalence of  partial domination problem and maximum dominating $k$-set problem}

\begin{theorem} \label{t2}
Maximum dominating $k$-set problem and $\alpha$-partial dominating set problem are computationally equivalent for any graph $G=(V,E)$. 
\end{theorem}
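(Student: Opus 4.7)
The plan is to establish two polynomial-time reductions, one in each direction, both exploiting an obvious monotonicity.

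For the forward direction, I would show that an algorithm $\mathcal{A}_{kset}$ for the maximum dominating $k$-set problem yields an algorithm for the $\alpha$-partial dominating set problem. Given $(G,\alpha)$ with $n=|V|$, set the target $T=\lceil \alpha\cdot n\rceil$. For each $k\in\{1,2,\ldots,n\}$, invoke $\mathcal{A}_{kset}$ on $(G,k)$ to obtain a set $S_k$ with $|S_k|=k$ that maximizes $|N[S_k]|$. The $\alpha$-partial domination number $\gamma_\alpha$ equals the smallest $k$ for which $|N[S_k]|\ge T$, and the corresponding $S_k$ is an optimal $\alpha$-partial dominating set. Correctness follows from the fact that $\max_{|S|=k}|N[S]|$ is non-decreasing in $k$, so the first $k$ that meets the threshold gives exactly $\gamma_\alpha$. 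Since at most $n$ calls to $\mathcal{A}_{kset}$ are made, the reduction is polynomial (one can even replace the linear scan by binary search on $k$ using the same monotonicity).

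For the reverse direction, I would show that an algorithm $\mathcal{A}_{partial}$ for the $\alpha$-partial dominating set problem yields an algorithm for the maximum dominating $k$-set problem. Given $(G,k)$ with $n=|V|$, note that the maximum value of $|N[S]|$ over $|S|=k$ is an integer in $\{k,k+1,\ldots,n\}$. For each candidate value $m$ in this range, set $\alpha_m=m/n$ and run $\mathcal{A}_{partial}$ on $(G,\alpha_m)$ to obtain an optimal $\alpha_m$-partial dominating set $S_m$ of size $\gamma_{\alpha_m}$. Return the largest $m$ for which $\gamma_{\alpha_m}\le k$, together with $S_m$ augmented (if needed) by $k-|S_m|$ arbitrary extra vertices. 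Correctness follows from the dual monotonicity: $\gamma_\alpha$ is non-decreasing in $\alpha$, and the largest $m$ with $\gamma_{m/n}\le k$ is exactly $\max_{|S|=k}|N[S]|$. At most $n$ calls to $\mathcal{A}_{partial}$ suffice, and again binary search on $m$ would trim this to $O(\log n)$ calls.

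The two reductions together give the claimed computational equivalence. I do not expect a serious obstacle here; the only technical care needed is in handling the floor/ceiling when translating between the integer parameter $k$ (or the integer $|N[S]|$) and the real parameter $\alpha\in(0,1]$, and in verifying that augmenting a partial dominating set with arbitrary vertices to reach size exactly $k$ never decreases $|N[S]|$, which is immediate since $N[\cdot]$ is monotone under set inclusion.
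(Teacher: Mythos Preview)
Your proposal is correct and matches the paper's own proof essentially line for line: both directions use at most $n$ oracle calls indexed by $\alpha_i=i/n$ (respectively by $k=1,\ldots,n$), rely on the monotonicity of $\gamma_\alpha$ in $\alpha$ and of $\max_{|S|=k}|N[S]|$ in $k$, and pad an optimal partial dominating set with arbitrary vertices to reach size exactly $k$. Your remark about binary search is a small extra observation not in the paper, but the underlying argument is identical.
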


\begin{proof}
Let us first assume that an algorithm for the partial domination problem is known, and for a given integer $k\in \{1, \ldots, n\}$, we want to solve the max dominating $k$-set problem.

Let $|V|=n$. Using the said algorithm for $\alpha_i=i/n$, for $i = 1, 2, \dots, n$, we compute the $\alpha_i$-partial domination number $\gamma_{\alpha_i}$. Note that, $\gamma_{\alpha_1}\leq \gamma_{\alpha_2} \leq \dots \leq \gamma_{\alpha_n}=\gamma$, where $\gamma$ is the domination number of the graph $G$. Our objective is to solve the max dominating $k$-set problem. 

If $k\geq \gamma_{\alpha_n}(=\gamma)$, then any $k$-size subset of $V$ containing the $\alpha_n$-partial dominating set is a max dominating $k$-set of $G$. If  $k \in [\gamma_{\alpha_i},\gamma_{\alpha_{i+1}})$, where $i+1 \leq n$, we have maximum dominated neighborhood size of a $k$-set is $i$, and any $k$-size subset of $V$ containing the $\alpha_i$-partial dominating set (of size $\gamma_{\alpha_i}$) is a max dominating $k$-set.
Hence, if we know an efficient algorithm for the partial dominating set problem, then we can have an efficient algorithm for the max dominating $k$-set problem.

Conversely, if an algorithm for the maximum dominating $k$-set problem is known, we can use it to solve the $\alpha$-partial dominating set problem for a given $\alpha \in (0,1]$ as follows. We solve the maximum dominating $k$-set problem for $k=1,2, \ldots, n$ in order, and stop as soon as the maximum dominated neighborhood size of the obtained $k$-set at some step is greater than or equal to $\lceil{\alpha\cdot n}\rceil$. Report the corresponding max dominating $k$-set as the $\alpha$-partial dominating set for the given $\alpha$. 
\end{proof}

The concept of domination defect of graphs have been studied in \cite{jn1, jn6}, where the $r$-domination defect of a graph $G=(V,E)$ is the minimum number of vertices which are left undominated by a vertex subset of size $\gamma-r$ of $G$ (here, $\gamma$ is the size of minimum dominating set of $G$). The domination defect of some parametrized families of graphs, the composition of graphs, the edge-corona of graphs, and the join and corona of graphs have been studied \cite{jn4,jn2,jn3,jn5}. 

Note that $r$-domination defect of a graph $G$ is nothing but ($|V|$) $-$ (max dominated neighborhood size of a $(\gamma-r)$-set). Hence, by using Theorem \ref{t2}, we also have the following result. 

\begin{corollary}
    Partial domination and domination defect are polynomially equivalent regarding computational complexity.
\end{corollary}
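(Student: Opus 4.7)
The plan is to exploit the identity highlighted in the remark just before the corollary, namely
\[
r\text{-domination defect of }G \;=\; |V|\,-\,\bigl(\text{max dominated neighborhood size of a }(\gamma - r)\text{-set}\bigr),
\]
where $\gamma$ is the domination number of $G$. This gives an almost syntactic conversion between the maximum dominating $k$-set problem (with $k = \gamma - r$) and the $r$-domination defect problem. Combined with Theorem \ref{t2}, which already polynomially relates the maximum dominating $k$-set problem and the $\alpha$-partial dominating set problem, transitivity of polynomial equivalence will deliver the corollary.

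For the forward direction, assume a polynomial-time oracle for the maximum dominating $k$-set problem. One can compute $\gamma$ by invoking the oracle for $k = 1, 2, \ldots, |V|$ and recording the smallest $k$ at which $|N[S]|$ first equals $|V|$; this uses at most $|V|$ oracle calls. Given an input $r$, setting $k = \gamma - r$ and making one more oracle call produces a max dominating $k$-set $S$, and $|V| - |N[S]|$ is then the $r$-defect. Conversely, assuming an oracle for the domination defect problem, the value $\gamma$ (and a corresponding minimum dominating set) can be recovered from the subset of size $\gamma$ realizing the $0$-defect. Then, for an input $k \leq \gamma$, setting $r = \gamma - k$ and invoking the defect oracle yields the max dominated neighborhood size of a $k$-set as $|V|$ minus the reported defect; for $k > \gamma$, a max dominating $k$-set trivially has $|N[S]| = |V|$.

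Chaining these two polynomial-time reductions with Theorem \ref{t2} yields the claim. The main subtlety I anticipate is the reverse direction: because the defect problem is naturally parameterized by $r$ rather than by an absolute subset size, one needs to ensure that $\gamma$ can be extracted from the defect oracle within a polynomial budget. This is straightforward once one observes that a $0$-defect certificate is exactly a minimum dominating set, so no additional complexity cost is incurred and the overall reductions remain polynomial.
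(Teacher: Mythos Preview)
Your proposal is correct and follows essentially the same route as the paper: the paper does not give a detailed proof of this corollary at all, but simply observes the identity $r\text{-defect} = |V| - (\text{max dominated nbd size of a }(\gamma-r)\text{-set})$ and invokes Theorem~\ref{t2}. Your write-up supplies the details the paper omits (in particular, how to recover $\gamma$ from each oracle and how to handle $k > \gamma$), but the underlying argument is the same.
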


\subsection{Existence of graphs where domination and partial domination are in different complexity classes}
We have already seen that due to Theorem \ref{t1}, the partial domination problem is NP-complete in those graph classes for which the domination problem is so. Now, a natural question arises whether the converse is also true; in other words, whether partial domination is polynomial time solvable whenever domination is so. Our following construction answers this.

Let $\mathcal{C}=c_1\land c_2\land \dots \land c_m$ be a $2$-CNF with $m$ clauses, where there are $n$ variables, namely $x_1,x_2,\dots ,x_n$, used in the clauses.  We construct the graph $G_{\mathcal{C}}=(V_{\mathcal{C}},E_{\mathcal{C}})$ corresponding to this $\mathcal{C}$ as follows. 

The vertex set is $V_{\mathcal{C}}=\{x_1,x_2,\dots,x_n,\bar{x}_1,\bar{x}_2,\dots,\bar{x}_n,$$c_1,c_2,\dots,c_m,d_{1_1},d_{1_2},\dots,$\\$d_{1_{2m}},$ $d_{2_1},d_{2_2},\dots,d_{2_{2m}},\dots,d_{n_1},d_{n_2},\dots,d_{n_{2m}},d\}$. 

The edge set is $E_{\mathcal{C}}=\{(x_i,\bar{x}_i),(x_i,d_{i_j}),(\bar{x}_i,d_{i_j}) : 1\leq i\leq n, 1\leq j\leq 2m\}\cup\{(x_1,d),(\bar{x}_1,d)\}\cup$ \\$\{(c_j,\ell_{j_1}),(c_j,\ell_{j_2}),(c_j,d) : c_j=\ell_{j_1}\vee\ell_{j_2},$ where $\ell_{j_1},\ell_{j_2}\in\{x_1,\dots,x_n,\bar{x}_1,\dots,\bar{x}_n\}; 1\leq j\leq m\}$.

An example has been illustrated in Figure \ref{fig:pic1}, where in a given $2$-CNF expression $\cal C$, the number of clauses ($m$) = 3, and the number of variables ($n$) = 4.\\

\begin{figure}[ht]
    \centering
    \includegraphics[width=1\linewidth]{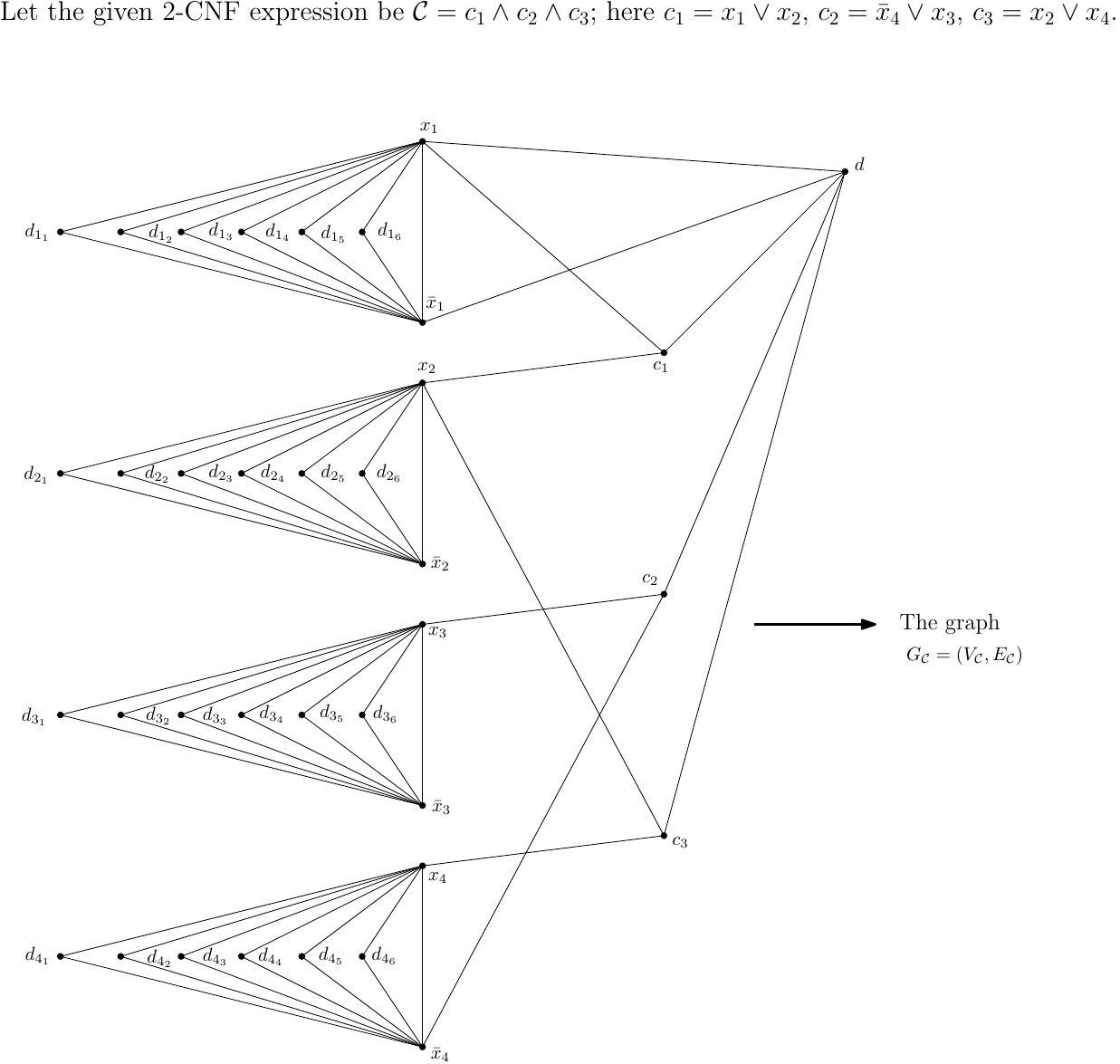}
    \caption{A given $2$-CNF expression $\cal C$ and its corresponding graph $G_{\cal C}=(V_{\cal C}, E_{\cal C})$}
    \label{fig:pic1}
\end{figure}

\begin{lemma}\label{lh1}
    The minimum dominating set problem is polynomial-time solvable for the graph $G_{\mathcal{C}}$.
\end{lemma}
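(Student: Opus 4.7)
The plan is to show that the minimum dominating set of $G_{\mathcal{C}}$ has size exactly $n$ when the $2$-CNF formula $\mathcal{C}$ is satisfiable, and exactly $n+1$ otherwise. Since $2$-SAT is solvable in polynomial time (for example, by the Aspvall--Plass--Tarjan algorithm), a minimum dominating set can then be produced in polynomial time by first deciding satisfiability of $\mathcal{C}$ and then emitting the corresponding literal vertices, with the extra vertex $d$ appended in the unsatisfiable case.

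First, I would establish a structural lower bound: every dominating set $D$ of $G_{\mathcal{C}}$ must contain at least one vertex from each pair $L_i := \{x_i,\bar{x}_i\}$. The point is that in $G_{\mathcal{C}}$ the only neighbors of each gadget vertex $d_{i_j}$ are $x_i$ and $\bar{x}_i$, so if $L_i \cap D = \emptyset$ then all $2m$ vertices of $D_i := \{d_{i_1},\ldots,d_{i_{2m}}\}$ must lie in $D$; replacing them by $\{x_i\}$ still dominates every $d_{i_j}$ (and $\bar{x}_i$), and is strictly smaller for $m \geq 1$. Consequently any minimum dominating set satisfies $|D \cap L| = n$ with exactly one literal per variable, together with possibly some vertices from $\{d,c_1,\ldots,c_m\}$.

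Next I would case-split on whether $d \in D$. If $d \in D$, then $d$ dominates every clause vertex $c_j$, and $\{d\}\cup\{x_1,\ldots,x_n\}$ is a dominating set of size $n+1$. If instead $d \notin D$ and $|D|=n$, then $D$ must consist of exactly one literal per variable and no clause vertex; interpreting this choice as a truth assignment (pick-into-$D$ $=$ true), the requirement that every $c_j$ be dominated forces each clause to contain at least one true literal, so $\mathcal{C}$ is satisfiable. Conversely, given any satisfying assignment, the $n$ true-literal vertices form a dominating set: each unpicked literal is dominated through the edge $(x_i,\bar{x}_i)$, each $d_{i_j}$ through its picked neighbor in $L_i$, each $c_j$ through a true literal, and $d$ through the picked vertex in $L_1$. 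This proves $\gamma(G_{\mathcal{C}})=n$ iff $\mathcal{C}$ is satisfiable, and $\gamma(G_{\mathcal{C}})=n+1$ otherwise.

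The main obstacle is justifying the tight lower bound $|D|\geq n+1$ in the unsatisfiable case; the key observation that unlocks it is that any hypothetical size-$n$ dominating set is forced by the earlier structural argument to be precisely one literal per variable with no clause vertex and without $d$, hence must directly encode a satisfying truth assignment for $\mathcal{C}$. Once this is in hand, the algorithm is immediate: run a polynomial-time $2$-SAT procedure on $\mathcal{C}$; in the satisfiable case output the true-literal vertices of a satisfying assignment, and in the unsatisfiable case output $\{d,x_1,\ldots,x_n\}$. Since constructing $G_{\mathcal{C}}$ is polynomial in $n$ and $m$, the whole procedure runs in polynomial time.
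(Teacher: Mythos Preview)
Your proposal is correct and follows essentially the same approach as the paper: both establish the lower bound $\gamma(G_{\mathcal{C}})\ge n$ via the $d_{i_j}$ replacement argument, the upper bound $\gamma(G_{\mathcal{C}})\le n+1$ via the explicit dominating set $\{d,x_1,\ldots,x_n\}$, and then prove $\gamma(G_{\mathcal{C}})=n$ iff $\mathcal{C}$ is satisfiable, reducing the problem to polynomial-time $2$-SAT. The only cosmetic difference is that you organize the equivalence via a case-split on $d\in D$ and spell out the resulting algorithm explicitly, whereas the paper states the biconditional directly; one minor imprecision is your sentence claiming every minimum dominating set has $|D\cap L|=n$ with exactly one literal per variable, which is only forced when $|D|=n$, but your subsequent argument uses it only in that case so the proof goes through.
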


\begin{proof}
    First, observe that any minimum dominating set of the graph $G_\mathcal{C}$ must contain at least one of the vertices between $x_i$ and $\bar{x}_i$ for all $1\leq i\leq n$. 
    
    Because if not, then there exists a minimum dominating set of $G_\mathcal{C}$ for which there is some $i$, such that neither $x_i$ nor $\bar{x}_i$ is in that minimum dominating set. But then to dominate $d_{i_{1}},\dots,d_{i_{2m}}$, it must hold that the said minimum dominating set contains all the vertices $d_{i_{1}},\dots,d_{i_{2m}}$. Instead, if we take $x_i$ and do not take these $2m$ vertices, we still get a dominating set of the graph $G_{\cal C}$. Since $2m > 1$, our claim follows. Thus, the size of the minimum dominating set is greater than or equal to $n$. Also, observe that $\{x_1,x_2,\dots,x_n,d\}$ is a dominating set of the graph $G_\mathcal{C}$. Thus, the size of the minimum dominating set is less than or equal to $(n+1)$. Implying, 
    the size of minimum dominating set of $G_{\mathcal{C}}$ is either $n$ or $(n+1)$.

    Now, observe that if the $2$-CNF $\mathcal{C}$ is satisfiable, then there exists a truth assignment of the $n$ variables for which all the clauses are true. According to this truth assignment, for each $i$, either $x_i$ or $\bar{x}_i$ will be true, not both. We take the vertices corresponding to these true values for all $i \in \{1,\ldots, n\}$, and get a vertex subset of size $n$. Since all the clauses are true for this truth assignment, at least one literal of each clause must be true. Hence, this $ n$-sized subset of vertices must be a dominating set of $G_\mathcal{C}$ due to the construction. So, in this case, the size of the minimum dominating set of $G_\mathcal{C}$ must be $n$.

    Conversely, suppose that the minimum dominating set of $G_\mathcal{C}$ has size $n$. By our first observation, the minimum dominating set contains exactly one of $x_i$ and $\bar{x}_i$ for all $i\in \{1, \ldots, n\}$. We take these $n$ variables (or, their negations) to be true and the rest to be false. This will give a satisfiable assignment of the $2$-CNF $\mathcal{C}$ by our construction.

    Therefore, the $2$-CNF $\mathcal{C}$ is satisfiable if and only if its corresponding graph $G_{\mathcal{C}}$ has domination number $n$.

    Since the $2$-SAT problem can be solved in polynomial time, the result follows.
\end{proof}

\begin{lemma}\label{lh2}
    The partial domination problem is NP-hard in the graph $G_{\mathcal{C}}$.
\end{lemma}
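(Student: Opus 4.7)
The plan is to reduce from MAX-2-SAT, which is well-known to be NP-hard. Given an instance $(\mathcal{C}, K)$ of MAX-2-SAT with $n$ variables, $m\ge 1$ clauses, and target $K\ge 1$ (the degenerate cases being trivial), I would reuse the very graph $G_{\mathcal{C}}$ already defined, take the budget $\kappa = n$, and set $\alpha = (2n+2mn+1+K)/|V_{\mathcal{C}}|$. The claim to verify is that $G_{\mathcal{C}}$ admits an $\alpha$-partial dominating set of size at most $\kappa$ if and only if some assignment of $\mathcal{C}$ satisfies at least $K$ clauses; the construction is clearly polynomial in $|\mathcal{C}|$.

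For the forward direction, given an assignment satisfying $s\ge K$ clauses, I would take $S$ to be the $n$ true-literal vertices. An easy count shows that $S$ dominates every literal, every $d_{i_j}$, the vertex $d$ (via $x_1$ or $\bar{x}_1$), and exactly the $s$ satisfied clauses, so $|N[S]| = 2n+2mn+1+s \ge \alpha|V_{\mathcal{C}}|$. For the converse, I would start from any $S$ with $|S|\le n$ (WLOG $|S|=n$, since padding does not reduce $|N[S]|$) satisfying $|N[S]|\ge 2n+2mn+1+K$, and set $T = \{i : S\cap\{x_i,\bar{x}_i\}\neq\emptyset\}$, $t = |T|$. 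The structural observation is that for each $i\notin T$ the $2m$ vertices $d_{i_1},\ldots,d_{i_{2m}}$ have $\{x_i,\bar{x}_i\}$ as their only non-self neighbors, so each such vertex can be dominated only by being placed in $S$ itself. Because at most $n-t$ non-literal vertices lie in $S$, at most $n-t$ of these orphans can be rescued, leaving at least $(2m-1)(n-t)$ vertices undominated. Combined with the global bound $|V_{\mathcal{C}}|-|N[S]| \le m-K$, this gives $(2m-1)(n-t) \le m-K < 2m-1$, which forces $t=n$ and, together with $|S|=n$, implies that $S$ is a pure literal set with exactly one literal per pair. Reading these as a truth assignment, a clause vertex $c_j$ is dominated iff one of its two literal neighbors lies in $S$ (since $d\notin S$ and no $c_j\in S$), iff the clause is satisfied; hence $|N[S]| = 2n+2mn+1+s \ge 2n+2mn+1+K$ yields $s\ge K$.

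The main obstacle I anticipate is precisely this forcing step: one must rule out the possibility of spending budget on clause vertices, on $d$, or on the $d_{i_j}$ vertices themselves. This is exactly where the $2m$-fold replication of the guard vertices $d_{i_j}$ becomes decisive: leaving any variable pair unrepresented costs $\Omega(m)$ undominated vertices, while each non-literal placed in $S$ can rescue at most one orphan, so the single inequality $m-K < 2m-1$ absorbs all these cases simultaneously. The remaining book-keeping (padding $|S|$ up to $n$, verifying that a pure literal $n$-set makes the dominated-clauses count coincide with the satisfied-clauses count) is routine.
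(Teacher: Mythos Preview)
Your reduction is correct, and the core idea---reducing from MAX-2-SAT to a size-$n$ partial-domination question on $G_{\mathcal C}$, then arguing that any feasible $n$-set must be a pure literal transversal---is the same one the paper uses. The main difference is in packaging: the paper argues by contradiction, first invoking Theorem~\ref{t2} to convert a hypothetical polynomial partial-domination algorithm into a polynomial maximum dominating $n$-set algorithm, and then asserting (with less detail than you give) that an optimal $n$-set is a pure literal set encoding a MAX-2-SAT optimum. You instead give a direct many-one reduction to the decision version of partial domination with explicit $\alpha$ and $\kappa$, and you supply a careful counting argument---the $(2m-1)(n-t)\le m-K<2m-1$ inequality---to force $t=n$. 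Your route avoids the detour through Theorem~\ref{t2} and makes the forcing step rigorous where the paper only says the construction ``suggests'' it; the paper's route, on the other hand, is shorter once Theorem~\ref{t2} is available.
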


\begin{proof}
    Assume that the partial domination problem is polynomial-time solvable in the graph $G_{\mathcal{C}}$.
    
    Consider the maximum dominating $k$-set problem in $G_{\mathcal{C}}$ for a given $k$ with $1\leq k\leq |V_{\mathcal{C}}|$. By Theorem \ref{t2}, this is solvable in polynomial time.

    In particular, for $k=n$, we can find the maximum dominating $n$-set in polynomial time.

    Observe that, for any $k\in \{1,2, \ldots, n\}$, the maximum dominating $k$-set of $G_{\mathcal{C}}$ must be a subset of $\{x_1,x_2,\dots,x_n, \bar{x}_1, \bar{x}_2, \dots, \bar{x}_n\}$. Again, the construction of $G_{\mathcal{C}}$ suggests that the maximum dominating $k$-set will contain either $x_i$ or $\bar{x}_i$, and not both, for some values of $i$. These, in particular, hold for $k=n$. So, the maximum dominating $n$-set induces a truth assignment to the $n$ variables. By the definition of maximum dominating $n$-set of a graph and due to the construction of $G_{\mathcal{C}}$, this induced truth assignment maximizes the number of true clauses in the $2$-CNF $\mathcal{C}$. 
    
    Note that $\{($the neighborhood size of a maximum dominating $n$-set$) - (2n+2mn+1)\}$ denotes the maximum number of satisfying clauses of $\mathcal{C}$. This implies that there exists a polynomial-time algorithm for solving the Max-$2$-SAT problem. However, Max-$2$-SAT is NP-hard (see \cite{papa}). Thus, the result follows.
\end{proof}

The above two lemmas lead to the following result.

\begin{theorem}\label{t3}
    There exist graphs for which the domination problem is polynomial-time solvable, but partial domination is NP-hard.
\end{theorem}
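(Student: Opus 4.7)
The plan is to exhibit the explicit graph family $\mathcal{G} = \{G_{\mathcal{C}} : \mathcal{C} \text{ is a 2-CNF formula}\}$ constructed just above, and show that it witnesses the required separation between the complexities of domination and partial domination. Since Lemmas \ref{lh1} and \ref{lh2} have already established the two sides of the separation for the individual graphs $G_{\mathcal{C}}$, the proof is essentially a matter of assembling these two facts into a statement about the class.

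First I would invoke Lemma \ref{lh1}, which reduces the minimum dominating set problem on $G_{\mathcal{C}}$ to deciding whether the underlying 2-CNF $\mathcal{C}$ is satisfiable (together with the trivial observation that $\{x_1,\ldots,x_n,d\}$ is always a dominating set of size $n+1$). Since 2-SAT is in P, this gives a polynomial-time algorithm for minimum dominating set on every member of $\mathcal{G}$, and hence on the class as a whole.

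Next, I would invoke Lemma \ref{lh2}, which establishes that a polynomial-time algorithm for partial domination on $G_{\mathcal{C}}$ would, via Theorem \ref{t2}, yield a polynomial-time algorithm for maximum dominating $n$-set in $G_{\mathcal{C}}$ and therefore, by the counting identity in that lemma, a polynomial-time algorithm for Max-2-SAT. Since Max-2-SAT is NP-hard, partial domination must be NP-hard on the class $\mathcal{G}$.

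The only thing that needs a moment's care is that the reduction from an arbitrary 2-CNF $\mathcal{C}$ to the graph $G_{\mathcal{C}}$ is polynomial in the size of $\mathcal{C}$ (so that NP-hardness transfers), which is immediate from the construction: $|V_{\mathcal{C}}| = 2n + m + 2mn + 1$ and $|E_{\mathcal{C}}|$ is $O(mn)$. Combining the two lemmas with this observation, $\mathcal{G}$ is a class of graphs on which domination lies in P while partial domination is NP-hard, which is exactly the statement of the theorem. I do not anticipate any obstacle beyond correctly citing the two lemmas; the conceptual work was done in proving them.
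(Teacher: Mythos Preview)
Your proposal is correct and follows exactly the paper's approach: the paper's proof of Theorem~\ref{t3} consists of nothing more than the sentence ``The above two lemmas lead to the following result,'' i.e., it simply combines Lemmas~\ref{lh1} and~\ref{lh2} as you do. Your additional remark that the map $\mathcal{C}\mapsto G_{\mathcal{C}}$ is polynomial-time is a harmless (and arguably helpful) clarification.
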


\section{Max dominating $k$-set problem for unit interval graphs} \label{interval}

A graph $G=(V,E)$ is said to be an interval graph if there exists a layout of a set $\cal I$ of $|V|=n$ intervals such that each node $v_i \in V$ corresponds to an interval in $\cal I$; each interval $v_i\in \cal I$ is specified by a pair of points $(a_i,b_i)$ on a real line, termed as its start- and end-point. Each edge $e=(v_i,v_j) \in E$
implies that the intervals corresponding to $v_i$ and $v_j$ overlap in that layout. An example of an interval graph is given in Figure \ref{fig:pic2}, where we have shown the intervals to be non-overlapping in the plane for clarity, though all of them lie on the horizontal (dashed) line. Testing whether a given graph $G=(V,E)$ is an interval graph needs $O(|V|+|E|)$ time \cite{golumbic}; in the same time, the corresponding interval layout can also be obtained. We can assume that the start- and end-points of the members in $\cal I$ are all distinct. 

\begin{figure}[ht]
    \centering
    \includegraphics[width=0.9\linewidth]{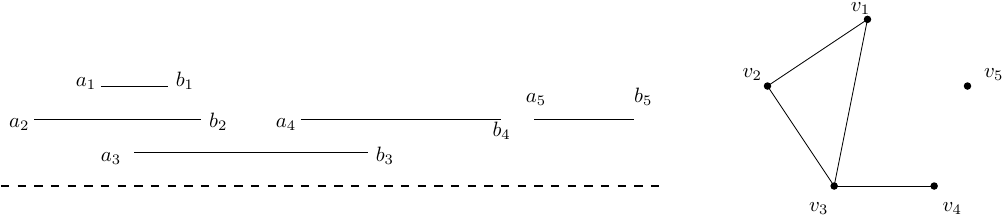}
    \caption{An interval layout and its corresponding interval graph}
    \label{fig:pic2}\vspace{-0.1in}
\end{figure}

In this section,  we provide a dynamic programming algorithm for computing a maximum dominating $k$-set in unit interval graphs, where each interval in the set $\cal I$ has the same length. Here, for each pair of intervals $v_i,v_j\in V$, their corresponding intervals in $\cal I$ are either disjoint or are properly overlapping (none of them is properly contained in the other interval). See Figure \ref{fig:pic3} for an example of a unit interval graph.

\begin{figure}[ht]
    \centering
    \includegraphics[width=0.9\linewidth]{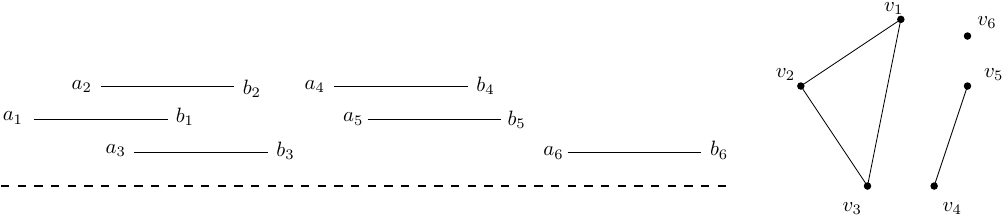}
    \caption{A unit interval graph consisting of $6$ vertices}\vspace{-0.1in}
    \label{fig:pic3}
\end{figure}

\subsection{Dynamic programming algorithm}
{\bf Preprocessing:} We sort the end-points of the intervals in the set $\cal I$, and name the intervals as $v_i=[a_i,b_i]$, $i=1,\dots,n$, according to the increasing order of their right end-points. For each interval $v_i=[a_i,b_i]$, we store the following information: 
\begin{enumerate}
    \item $x(i)=$ Number of $b_j$'s ($j\neq i$) between $a_i$ and $b_i$,
    \item $y(i)=$ Number of $a_j$'s ($j\neq i$) between $a_i$ and $b_i$, 
    \item $z(i)=$ Number of $a_j$'s ($j\neq i$) between $b_{i-1}$ and $b_i$,
    \item $r_a(i)= max\{a_j| a_i\leq a_j<b_i\}$, and
    \item $r_b(i)= b_j$, where $j$ is such that $r_a(i)=a_j$.
\end{enumerate}

{\bf Algorithm:} For a fixed $b_i$, $i\in\{1,\dots,n\}$, and a fixed integer $\ell$ ($\ell\leq k)$, let us define three functions $f$, $g$ and $h$ as follows,
\begin{itemize}
    \item[$\bullet$] $f(b_i,\ell) = \max \{|N[S_\ell]|: S_\ell\subseteq\{v_1,\dots, v_i\}, |S_\ell|=\ell, v_i\in S_\ell$\} (* max dominated {\em nbd} size of an $\ell$-set $S_\ell$, where $S_\ell\subseteq\{v_1,\dots,v_i\}$ and $v_i\in S_\ell$ *), 
    \item[$\bullet$] $g(b_i,\ell)= \max \{ |N[S_\ell]|: S_\ell\subseteq\{v_1,\dots, v_i\}, |S_\ell|=\ell, v_i\notin S_\ell, v_i\in N[S_\ell ]$\} (* max dominated {\em nbd} size of an $\ell$-set $S_\ell\subseteq\{v_1,\dots,v_i\}$, when $v_i\not\in S_\ell$, $v_i\in N[S_\ell]$ *), 
    \item[$\bullet$] $h(b_i,\ell)= \max \{ |N[S_\ell]|: S_\ell\subseteq\{v_1,\dots, v_i\}, |S_\ell|=\ell, v_i\notin S_\ell, v_i\notin N[S_\ell]$\}  (* max dominated {\em nbd} size of an $\ell$-set $S_\ell\subseteq\{v_1,\dots,v_i\}$, when $v_i\not\in S_\ell$, $v_i\not\in N[S_\ell]$ *).
\end{itemize}

The maximum dominated neighborhood size ({\em nbd size}) of a $k$-set is \\
$\max\{f(b_n,k),g(b_n,k),h(b_n,k)\}$.

We find the values of the functions $f$, $g$, and $h$ recursively using dynamic programming. We initialize these functions for $b_1$ as 

\begin{itemize}
\item[$\bullet$] $f(b_1,1)=1+y(1)$ and $f(b_1,j)=-\infty$ for $j=0,2,3, \ldots, k$. 
\item[$\bullet$] $g(b_1,j)=-\infty$ for all $j=0,1,2,3, \ldots, k$.
\item[$\bullet$]  $h(b_1,0)=0$ and $h(b_1,j)=-\infty$ for $j=1,2,3, \ldots, k$.
\end{itemize}
We process $b_i$, $i=1,2,\ldots, n$ in this order. For each $b_i$, we compute $f(b_i,l)$, $g(b_i,l)$ and $h(b_i,l)$, for $l=0,1,2,\ldots, k$, and proceed to compute the $f,g,h$ functions for $b_{i+1}$. For each $b_i$, $f(b_i,0)=g(b_i,0)=-\infty$ (since these are undefined), and $h(b_i,0)=0$. Also, for each $b_i$, we set $f(b_i,1)=1+y(i)+x(i)$. The recursive formulae for finding the values of the functions $f$, $g$, and $h$ are stated below.    
\begin{itemize}
\item[$\bullet$] $f(b_i,l)= \max\{ T_1,T_2,T_3\}$, where
    $$T_1=1+\max_{\forall b_j<a_i, r_b(j)<a_i}\{(f(b_j,l-1)+y(i)+x(i))\}$$
    $$T_2=1+\max_{\forall b_j<a_i, r_b(j)=b_\lambda>a_i}\{f(b_j, l-1)+y(i)+ i-1-\lambda \}$$
    $$T_3=\max_{\forall b_j \in (a_i,b_i)} \{f(b_j,l-1)+z(j+1)+z(j+2)+\ldots +z(i-1)+z(i)\}$$ 
\item[$\bullet$] $g(b_i,l)= \max_{\forall b_j \in (a_i,b_i)} \{f(b_j,l)\}$  
\item[$\bullet$] $h(b_i,l)= \max_{\forall b_j<a_i} \{f(b_j,l)\}$ 
\end{itemize}

Note that, if there does not exist any $b_j\in(a_i,b_i)$ in the recursion formula of $g(b_i,l)$, then the value of $g(b_i,l)=-\infty$ is set, where $l\in\{1,2,\dots ,k\}$. Similarly, when there does not exist any $b_j$ satisfying $b_j<a_i$ in the recursion formula of $h(b_i,l)$, then $h(b_i,l)=-\infty$ is set.

{\bf Justifications of the formulae:}

$\mathbf{f(b_i,l)}$: For the function $f$, the interval $v_i$ is to be chosen, so we have the following three possibilities,

\begin{itemize}
    \item $\mathbf{T_1}$: In this case, $v_i$ is not already dominated by the last chosen vertex $v_j$, so 1 is added. Here, additionally, because $r(b_j)<a_i$, we have to add $x(i)$ (number of intervals that ends in $[a_i,b_i]$) and $y(i)$ (number of intervals that starts in $[a_i,b_i]$), which are not dominated by any other previously chosen interval. 

    \item $\mathbf{T_2}$: Here also, $v_i$ was not dominated by $v_j$, so 1 is added. Now, since $r(b_j)=b_\lambda \in (a_i,b_i)$, choosing $v_i$ will newly dominate $y(i)$ (number of intervals that starts in $[a_i,b_i]$), and $(i-1-\lambda)$  (the number of intervals that ends in $(b_\lambda,b_i)$) nodes, which are not dominated by the previous chosen interval $v_j=[a_j,b_j]$. 

\item $\mathbf{T_3}$:  Here, the endpoint $b_j$ of the last chosen interval $v_j$, lies in $[a_i,b_i]$; thus, $v_i$ is already dominated by $v_j$. So unlike $T_1$ and $T_2$, $1$ is not added. As $a_j<a_i$, whatever end-points and start-points exist inside $[a_i,b_j]$, their corresponding intervals are already dominated and have been counted in $f(b_j,l-1)$. So, we need to consider the part $[b_j,b_i]$. In between $b_j$ and $b_i$, the only possible endpoints are $b_{j+1},b_{j+2},\dots,b_{i-1},b_{i}$. But their start points are between $a_j$ and $a_i$. So, they are already dominated by the interval $[a_j, b_j]$ and have been counted before. So, we only need to add the number of intervals having their start point between $b_j$ and $b_i$, which are $z(j+1)+z(j+2)+\dots+z(i-1)+z(i)$. This is because they overlapped with $v_i$ and had not been counted yet. 
\end{itemize}

$\mathbf{g(b_i,l)}$: In this case, the interval $v_i$ is not chosen but was already dominated by a previously chosen interval $v_j$. It is indicated by $b_j$'s range in the $\max$ operation.
  
$\mathbf{h(b_i,l)}$: The interval $v_i$ is neither chosen nor was dominated by any previously chosen interval $v_j$, and the range of $b_j$ indicates this in the $\max$ operation of the recursion formula.

\subsection{Processing, correctness and complexity results} 

During the preprocessing, after sorting, we compute $x(i)$, $y(i)$, $z(i)$, $r_a(i)$ and $r_b(i)$ by processing the end-points of the intervals in $\cal I$ in left to right order. We conceptually maintain a queue $Q$ for storing the processed subset of left end-points of the intervals whose right end-points are not yet processed. Practically, we maintain three scalar variables, $Q\_count$ (size of $Q$), $Q\_max$ (maximum element in $Q$), and $Q^r\_max$ (the right-end point of the interval corresponding to $Q\_max$). While processing a left-end point, say $a_j$, we add 1 to $Q\_count$, store $a_j$ in $Q\_max$, and its corresponding $b_j$ in $Q^r\_max$. When a right-end point $b_i$ of an interval $v_i$ is faced, its corresponding left-end point is the first element of $Q$ (as all the intervals are of unit length); it is deleted, or equivalently, $1$ is subtracted from $Q\_count$. Note that, $y(i)$ is the size of $Q$ at this instant and $r_a(i)$ is the maximum element in $Q$; thus we set $y(i)=Q\_count$, $r_a(i)=Q\_max$, and $r_b(i)=Q^r\_max$. If at this instant, $Q$ is empty queue, then set $r_a(i)=a_i$ and $r_b(i)=b_i$. The $z(i)$ values are also computed in the same left-to-right scan by maintaining another scalar variable $z\_count$.  Another right-to-left scan is needed to compute $x(i)$ values for each interval $v_i\in \cal I$. Thus, the preprocessing time complexity is $O(n\log n)$ using $O(1)$ extra space. In dynamic programming, $T_1,T_2$ and $T_3$ are computed as follows: 

\begin{itemize}

\item While computing $f(b_i,l)$, note that $r_b(i) \geq b_i$ for all $i$. Thus, $T_1=1+y(i)+x(i)+\max_{\forall r_b(j)<a_i}f(b_j,l-1) = 1+y(i)+x(i)+T_1'$ (say). We maintain a leaf search height-balanced binary tree $D_1$ on $\{r_b(j), j=1,2,\ldots, n\}$, where each leaf is associated with $f(b_j,l-1)$, and each internal node associated with the maximum of the $f(.,l-1)$ values in the subtree rooted at that node. Thus, we can compute $T_1'=\max_{\forall r_b(j)<a_i}f(b_j,l-1)$ in $O(\log n)$ time using $D_1$.

\item For computing $T_2$, we need to compute $T_2'=\max_{\forall b_j<a_i, a_i<r_b(j)=b_{\lambda}<b_i}(f(b_j, l-1)-\lambda)$, where $b_\lambda=r_b(j)$. Again, maintaining a similar leaf search height-balanced binary tree $D_2$ on $\{r_b(j),j=1,2,\ldots, n\}$ with each leaf node associated with $(f(b_j,l-1)-\lambda)$, and each internal node associated with the maximum of the $(f(.,l-1)-\lambda)$ values in the subtree rooted at that node, one can compute $T_2'$ in $O(\log n)$ time. Now, $T_2=1+y(i)+i-1+T_2'$ is computed using $T_2'$, obtained earlier.

\item For computing $T_3$, we use a $D_3$ data structure, which is a height-balanced leaf-search binary tree on $\{b_j, j=1,2,\ldots, n\}$, whose each leaf $b_\alpha$ contains $val_\alpha=f(b_\alpha,l-1) +n_\alpha$, where $n_\alpha$ = number of start-points in the interval $[b_\alpha,b_i]$. Each internal node whose sub-tree contains leaves $\{b_\alpha, \ldots ,b_\beta\}$ represents a $\max$-field containing the maximum of $\{val_\alpha, \ldots, val_\beta\}$. Now, the $T_3$ value for $b_i$ is obtained by searching the $D_3$ data structure for all $b_j$ satisfying $a_i<b_j<b_i$. This also can be done in $O(\log n)$ time. 

\end{itemize}

Thus, the computation of $f(b_i,l)=\max\{T_1, T_2, T_3\}$ needs $O(\log n)$ time using three data structures, each of size $O(n)$, and can be created in $O(n\log n)$ time.  Similarly, $g(b_i,l)$ and $h(b_i,l)$ can also be computed in $O(\log n)$ time using a $O(n)$ size data structure which can be built in $O(n\log n)$ time. Thus, 
\begin{theorem}
The maximum dominating $k$-set problem algorithm for unit interval graphs runs in $O(nk\log n)$ time using data structures of size $O(n)$. 
\end{theorem}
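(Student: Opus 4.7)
The dynamic program fills three tables $f,g,h$ indexed by the $O(nk)$ pairs $(i,\ell)$, and returns $\max\{f(b_n,k),g(b_n,k),h(b_n,k)\}$. To obtain the claimed $O(nk\log n)$ time and $O(n)$ auxiliary space, my plan is to verify (i) that the preprocessing meets its stated bounds, (ii) that each DP entry can be evaluated by an $O(\log n)$ range-max query on an augmented balanced tree of size $O(n)$, and (iii) that refreshing those trees each time the level $\ell$ advances is absorbed by the same budget. For (i), a single left-to-right sweep over the $2n$ sorted endpoints, maintaining the scalar descriptors $Q\_count$, $Q\_max$, $Q^r\_max$ of the FIFO queue of unmatched left endpoints together with a running counter, produces $y(i),z(i),r_a(i),r_b(i)$; a symmetric right-to-left sweep produces $x(i)$. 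The sorting step is $O(n\log n)$, the sweeps contribute $O(n)$ more, and the arrays occupy $O(n)$ space.

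\textbf{Per-entry cost.} Fix a level $\ell$. For $T_1$ and $T_2$, I would maintain height-balanced leaf-search trees $D_1, D_2$ keyed on $r_b(j)$, whose leaves store $f(b_j,\ell-1)$ and $f(b_j,\ell-1)-\lambda$ respectively (where $\lambda$ satisfies $b_\lambda = r_b(j)$), and whose internal nodes store the subtree maximum of these keys; the range-max queries over $\{j:r_b(j)<a_i\}$ and $\{j:a_i<r_b(j)<b_i\}$ each cost $O(\log n)$. The subtle case is $T_3$, whose inner sum $z(j+1)+\cdots+z(i)$ mixes $i$- and $j$-dependencies. Introducing the prefix sum $Z_i:=\sum_{\alpha=1}^{i}z(\alpha)$ telescopes this to
\[
T_3 \;=\; Z_i \;+\; \max_{b_j\in(a_i,b_i)}\bigl\{f(b_j,\ell-1)-Z_j\bigr\},
\]
so $D_3$ needs to store only the $i$-independent leaf key $f(b_j,\ell-1)-Z_j$, and its range-max query again costs $O(\log n)$. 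The simpler functions $g(b_i,\ell)$ and $h(b_i,\ell)$ are range-maxes of $f(b_j,\ell)$ on $b_j\in(a_i,b_i)$ and $b_j<a_i$, handled by a fourth analogous tree keyed on $b_j$ that is updated incrementally within each level as the values $f(b_i,\ell)$ become available.

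\textbf{Level update, total cost, and expected obstacle.} When $\ell$ advances, the leaf keys of $D_1, D_2, D_3$ depend on the completed column $f(\cdot,\ell-1)$, so I rebuild each tree from scratch at the start of level $\ell$ in $O(n\log n)$ time; the $g/h$ tree is populated by $n$ insertions of cost $O(\log n)$ each as the inner loop on $i$ proceeds. Across $k$ levels this contributes $O(nk\log n)$, matching the $O(nk\log n)$ spent on the $nk$ range-max queries, and the total auxiliary space stays at $O(n)$. The one non-routine step is the $T_3$ rewrite: without the prefix-sum telescoping, $D_3$'s leaf keys genuinely depend on $i$, and a naive implementation would require $\Theta(n)$ leaf updates per $i$ or a heavier two-dimensional structure. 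Verifying that the rewritten tree still captures the correct maximum over $b_j\in(a_i,b_i)$ with the appropriate additive constant $Z_i$ is the main correctness check I would work through; everything else reduces to standard augmented-BST bookkeeping.
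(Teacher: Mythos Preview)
Your proposal is correct and follows essentially the same approach as the paper: identical preprocessing via the $Q\_count,Q\_max,Q^r\_max$ sweep, the same augmented balanced trees $D_1,D_2,D_3$ keyed on $r_b(j)$ and $b_j$ and rebuilt once per level, and $O(\log n)$ range-max queries per DP cell. Your explicit prefix-sum telescoping $T_3 = Z_i + \max_{b_j\in(a_i,b_i)}\{f(b_j,\ell-1)-Z_j\}$ is in fact a cleaner statement of what the paper does when it stores $val_\alpha = f(b_\alpha,\ell-1)+n_\alpha$ with $n_\alpha$ the count of start-points in $[b_\alpha,b_i]$, which is the same decomposition up to the additive $Z_i$ shift.
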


\section{Algorithm for general interval graph}\label{arbint}

Now, we relax the unit-interval assumption. Here, a pair of overlapping intervals in $\cal I$ may either properly overlap or one interval properly contains the other one. We first execute a preprocessing phase based on the following observation.

\begin{observation}
Let $v_i, v_j\in V$ be two vertices in $G$ such that, in $\cal I$,  the interval corresponding to $v_i$ is completely contained in the interval corresponding to $v_j$. Here, if there exists an optimum solution $S_{opt}$ of the maximum dominating $k$-set problem with $v_i\in S_{opt}$ then there exists another optimum solution $S_{opt}'$ of that problem with $v_j\in S_{opt}'$ and $|S_{opt}|=|S_{opt}'|$.    
\end{observation}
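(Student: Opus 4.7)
The plan is to use a straightforward exchange argument, based on the key geometric fact that containment of intervals implies inclusion of closed neighborhoods. Specifically, if the interval for $v_i$ is contained in the interval for $v_j$, then any interval in $\mathcal{I}$ that overlaps the interval of $v_i$ must also overlap the interval of $v_j$. In graph terms this translates to $N[v_i]\subseteq N[v_j]$, and this will be the one observation driving the entire proof.

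First I would split into two cases depending on whether $v_j\in S_{opt}$. If $v_j\in S_{opt}$ already, the conclusion is trivial: set $S_{opt}'=S_{opt}$, which already contains $v_j$, has the same cardinality, and dominates exactly the same set of vertices, so it remains optimum. The interesting case is when $v_j\notin S_{opt}$. Here I define the swapped set
\[
S_{opt}'=(S_{opt}\setminus\{v_i\})\cup\{v_j\},
\]
which satisfies $|S_{opt}'|=|S_{opt}|=k$ and contains $v_j$.

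Next I would argue that $N[S_{opt}]\subseteq N[S_{opt}']$. Any vertex $u\in N[S_{opt}]$ is dominated by some $v\in S_{opt}$. If $v\ne v_i$, then $v\in S_{opt}'$ as well, so $u\in N[S_{opt}']$. If $v=v_i$, then $u\in N[v_i]\subseteq N[v_j]$ by the containment fact, and since $v_j\in S_{opt}'$, we still have $u\in N[S_{opt}']$. Hence $|N[S_{opt}']|\geq |N[S_{opt}]|$. Since $S_{opt}$ was assumed optimum and $|S_{opt}'|=k$, the reverse inequality also holds, giving equality and confirming that $S_{opt}'$ is itself an optimum $k$-set.

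The argument is essentially routine; there is no real obstacle beyond being careful about the case $v_j\in S_{opt}$, which would otherwise spoil the cardinality bookkeeping of the swap. The main conceptual content is translating the geometric containment $[a_i,b_i]\subseteq[a_j,b_j]$ into the graph-theoretic inclusion $N[v_i]\subseteq N[v_j]$, after which the exchange is forced.
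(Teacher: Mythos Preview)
Your proof is correct and is precisely the standard exchange argument one would use here. The paper itself does not supply a proof for this observation---it is stated without justification, presumably because the authors regard the inclusion $N[v_i]\subseteq N[v_j]$ (and the resulting swap) as immediate---so your argument in fact fills in what the paper leaves implicit.
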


\begin{figure}[ht]
    \centering
    \includegraphics[width=0.9\linewidth]{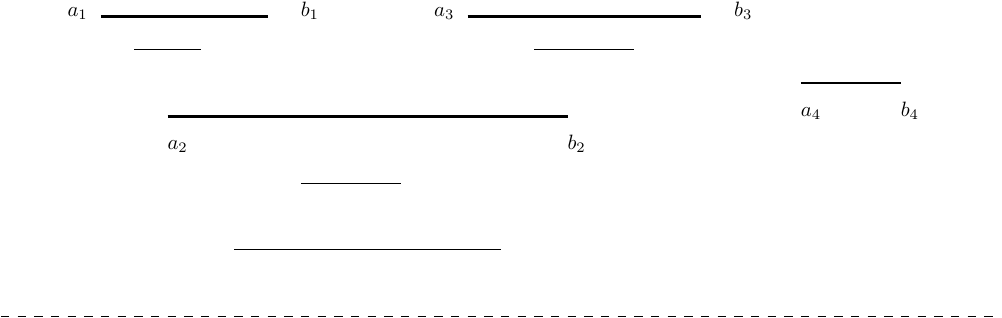}
    \caption{Here, $[a_1,b_1],[a_2,b_2],[a_3,b_3]$ and $[a_4,b_4]$ are non-deleted intervals, and the rest are marked as deleted intervals. All these intervals lie on the real line (dashed line), but they have been drawn in the two-dimensional plane for visual clarity.}
    \label{fig:pic4}
\end{figure}

Thus, we execute a linear pass to identify the intervals of $\cal I$ which are contained in at least one other interval of $\cal I$. These intervals are marked as deleted (see the intervals drawn with thin lines in Figure \ref{fig:pic4}).

Observe that the graph corresponding to the non-deleted intervals can be represented as a unit interval graph. Thus, we interchangeably use the terms unit interval layout and properly overlapping interval layout.
During this processing, for each non-deleted interval $v_i=[a_i,b_i]$, $i\in\{1,2,\dots ,m\}$ $(m\leq |V|=n)$, we also compute the following information: 

$x_d(i)=$ Number of deleted intervals with only right-end point  in $[a_i,b_i]$,\\
$y_d(i)=$ Number of deleted intervals with only left-end point in $[a_i,b_i]$, \\
$w_d(i)=$ Number of deleted intervals with both the end points in $[a_i,b_i]$.

Also, for each pair of intervals $v_i$ and $v_j$, we find and store $c_d(i,j)=$ number of deleted intervals dominated by both $v_i$ and $v_j$.

From now onwards, we use $\cal I'$ to denote the undeleted intervals. The end-points of the members in $\cal I$ are stored in a sorted list $\cal L$, where each end-point of an interval $I\in \cal I$ points to its other end-point in $\cal L$. We execute a linear pass among the members in $\cal L$. During this execution, we maintain a balanced leaf-search binary tree $\cal T$. When the left-end point of a member of $\cal I$ is processed, it is inserted in $\cal T$. When the right-end point $b$ of an interval $I=[a,b]$ is processed, we search $a$ in $\cal T$. If at least one entry to the left of $a$ in $\cal T$ exists, then $I$ is fully contained in that interval; $I$ is marked deleted and $a$ is deleted from $\cal T$. 
Thus, in $O(n\log n)$ time we can compute the set ${\cal I}'$ of properly overlapped intervals in the set $\cal I$. The set ${\cal I}\setminus {\cal I}'$ is the set of deleted intervals, which will not be considered in our dynamic programming algorithm for computing the maximum dominating $k$-set. To compute the values of the parameters defined earlier, we create a 2D range tree $\cal R$, whose each element is a 2D point $(\alpha,\beta)$, which corresponds to the two end-points of a deleted interval $[\alpha,\beta]$ ($\in {\cal I}\setminus {\cal I}'$). We consider each interval $I_i=[a_i,b_i] \in {\cal I}'$. Here, (i) $x_d(i)$ = number of $\beta$-points lying in $I_i$ whose $\alpha$-point does not lie in $I_i$, (ii) $y_d(i)$ =  number of $\alpha$-points lying in $I_i$ whose $\beta$-point does not lie in $I_i$, and (iii) $w_d(i)$ = number of points whose both $\alpha$ and $\beta$ values lie in $I_i$. All these three numbers can be computed in $O(\log^2n)$ time.

To compute $c_d(i,j)$, for $v_i,v_j \in {\cal I}'$, we create a bipartite graph $B({\cal I}\setminus {\cal I}', {\cal I}')$. While computing $x_d(i)$, $y_d(i)$ and $w_d(i)$, we have already identified the deleted intervals in ${\cal I}\setminus {\cal I}'$ that are dominated by each interval $I_i \in {\cal I}'$. Those edges are put in the graph $B$. At the end of this processing, we observe each node $u_\theta\in {\cal I}\setminus {\cal I}'$, and add 1 to each cell $c_d(i,j)$ such that $u_\theta$ is adjacent to $v_i,v_j \in {\cal I}'$ (if any such pair exists). Thus, the worst-case time complexity of computing the graph $B$ is $O(n^2)$, and computing $c_d(i,j), v_i,v_j \in {\cal I}'$ is also $O(n^2)$. 

Note that the $m$ intervals of ${\cal I}'$ are properly overlapping intervals. So, this set of $m$ intervals behaves as unit intervals, and computing the functions $x(i)$, $y(i)$, and $r_b(i)$ for $i=1,2,\dots,m$ is as in the previous section. 
Also, the definitions of the functions $f$, $g$, and $h$ remain the same as in the previous section; their recursive formulae are given below. Finally, the maximum dominated neighborhood size of a $k$-set is computed as $\max\{ f(b_m,k),g(b_m,k),h(b_m,k)\}$.

{\bf{Initialization:}} The values of the three functions for $b_1 \in {\cal I}'$ are as follows,

\begin{itemize}
\item[$\bullet$] $f(b_1,1)=1+y(1)+x_d(1)+y_d(1)+w_d(1)$ and $f(b_1,j)=-\infty$ for $j \neq 1$,
\item[$\bullet$] $g(b_1,j)=-\infty$ for all $j=0,1,\ldots, k$, and
\item[$\bullet$] $h(b_1,0)=0$ and $h(b_1,j)=-\infty$ for $j=1,2,\ldots, k$.
\end{itemize}

Also, we have $f(b_i,0)=-\infty$, $g(b_i,0)=-\infty$ and $h(b_i,0)=0$ for all $i\in \{1,2,\dots,m\}$. For all $b_i$, $f(b_i,1)=1+y(i)+x(i)+x_d(i)+y_d(i)+w_d(i)$ is also set in this initialization step.

{\bf{Recursive formulas:}} 

Recursive formulas for finding the values of the functions $f$, $g$, and $h$ are as follows, which are minor tailoring of the formulae presented for unit intervals, and the justification of the tailoring is also mentioned.

\begin{itemize}
\item[$\bullet$] $f(b_i,l)=\max\{T_1, T_2, T_3\}$, where\\
    $T_1=1+\max_{b_j<a_i, r_b(j)<a_i} \{f(b_j,l-1)+y(i)+x(i)+x_d(i)+y_d(i)+w_d(i)-c_d(i,j)\}$, \\
    $T_2= 1+\max_{b_j<a_i, r_b(j)=b_{\lambda}>a_i}\{f(b_j, l-1)+y(i)+i-1-\lambda+x_d(i)+y_d(i)+w_d(i)-c_d(i,j)\}$, and \\
    $T_3=\max_{a_i<b_j<b_i} \{f(b_j,l-1)+z(j+1)+z(j+2)+\dots +z(i-1)+z(i)+x_d(i)+y_d(i)+w_d(i)-c_d(i,j)\}$,\\
\item[$\bullet$] $g(b_i,l)= \max_{a_i<b_j<b_i}\{f(b_j,l)\}$,\\
\item[$\bullet$] $h(b_i,l)= \max_{b_j<a_i}\{f(b_j,l)\}$.
\end{itemize}

If there does not exist any $b_j\in(a_i,b_i)$ in the recursive formula of $g(b_i,l)$, then $g(b_i,l)=-\infty$ is set, where $l\in\{1,2,\dots ,k\}$. Also, when no $b_j$ with $b_j<a_i$ exists in the recursive formula of $h(b_i,l)$, then $h(b_i,l)=-\infty$, where $l\in\{1,2,\dots ,k\}$.

{\bf{Justification of the recursive formulas:}}

When we consider the $m$ non-deleted intervals, which are dominated by the chosen $l$-set, the recursive formulae will be the same as those of the unit interval graph on those $m$ intervals. Below, we show that the count of the dominated deleted intervals by a chosen $l$-set is correct.

As the interval $v_i\in {\cal I}'$ is included in the chosen $l$-set, the number of deleted intervals dominated by  $v_i$ is $x_d(i)+y_d(i)+w_d(i)$; among them $c_d(i,j)$ = number of deleted intervals which are already dominated by the last included interval $v_j=[a_j,b_j]\in {\cal I}'$, was previously counted in $f(b_j,l-1)$. Hence, we added $x_d(i)+y_d(i)+w_d(i)-c_d(i,j)$ in the recursion formula of $f(b_i,l)$, which is precisely the number of newly dominated deleted intervals by $v_i$.

For the functions $g$ and $h$, the $i$-th interval is not included in the $l$-set while processing. Hence, no new deleted interval is dominated. 
Thus, the $O(n^2)$ time preprocessing leads to the following result.

\begin{theorem}
The time complexity for computing the maximum dominating $k$-set problem for the intersection graph of a set of arbitrary intervals is $O(n^2k)$. 
\end{theorem}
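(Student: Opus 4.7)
The proof is essentially a bookkeeping argument combining the preprocessing cost with the cost of filling in the dynamic programming table, and I would organise it in exactly those two pieces.

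For the preprocessing, I would itemise the four contributions: sorting the endpoints, $O(n \log n)$; the left-to-right scan that uses a balanced leaf-search tree to mark the intervals contained in some other interval, $O(n \log n)$; the $n$ range queries on a 2D range tree to obtain $x_d(i), y_d(i), w_d(i)$ for every non-deleted interval, $O(n \log^2 n)$ in total; and the computation of the pairwise counts $c_d(i,j)$, which in the worst case is $\Theta(n^2)$ because the bipartite graph $B$ can have $\Theta(n^2)$ edges and each deleted node may contribute to $\Theta(n)$ pairs of non-deleted intervals. The last item dominates, so preprocessing is $O(n^2)$.

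For the dynamic programming, there are $O(mk) \leq O(nk)$ cells to fill. I would analyse the three functions separately. Both $g(b_i,l)$ and $h(b_i,l)$ are single maxima over already-tabulated $f$-values and do not involve $c_d$, so $O(n)$ per cell suffices trivially. The interesting case is $f(b_i,l) = \max\{T_1,T_2,T_3\}$. Here the main obstacle — the reason one cannot recycle the $O(nk \log n)$ argument from the unit-interval case — is the term $c_d(i,j)$ inside each of $T_1,T_2,T_3$, which genuinely depends on both indices. In the unit case every summand depended on $j$ alone once $i$ was fixed, so a single height-balanced tree keyed on the $b_j$'s answered each max-query in $O(\log n)$. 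With $c_d(i,j)$ present we would need fast queries of the form $\max_{j}\{F(j) - c_d(i,j)\}$ as $i$ varies, and no obvious preprocessing achieves this in sub-linear time. I would therefore settle for a direct $O(n)$ scan over all admissible $j$ for each pair $(i,l)$, yielding $O(n)$ per cell and $O(n^2 k)$ for the whole table.

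Adding the $O(n^2)$ preprocessing and the $O(n^2 k)$ DP gives the claimed $O(n^2 k)$ bound, with no amortisation or clever data structuring required beyond identifying the $c_d(i,j)$ coupling as the unavoidable bottleneck relative to the unit-interval analysis.
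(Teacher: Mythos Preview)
Your proposal is correct and matches the paper's approach: the paper likewise records the $O(n^2)$ preprocessing (dominated by tabulating all $c_d(i,j)$) and then simply states the theorem, leaving the $O(n^2k)$ cost of the dynamic programming implicit. You are in fact more explicit than the paper in pinpointing the $c_d(i,j)$ coupling as the reason the $O(\log n)$-per-cell data structures from the unit-interval case no longer apply, which is exactly the right diagnosis.
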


\section{Unit square intersection graph where a straight line intersects each square} \label{square}
The partial domination problem is NP-hard for axis-parallel unit square intersection graphs, since the minimum dominating set problem is NP-hard in this class. So, we will explore a special subclass of this class.

Consider the intersection graph of a given set $S=\{s_1,s_2,\dots,s_n \}$ of axis parallel unit squares in $I\!\!R^2$, where all the members of $S$ are intersected by the straight line $L: y=mx+c$, that makes an angle $\theta$ with the positive $x$-axis (i.e., $m=\tan\theta$), where $0\degree \leq \theta\leq 45\degree$. The objective is to compute the maximum dominating $k$-set for the intersection graph of members of $S$. 

\begin{figure}[ht]
    \centering
    \includegraphics[width=0.9\linewidth]{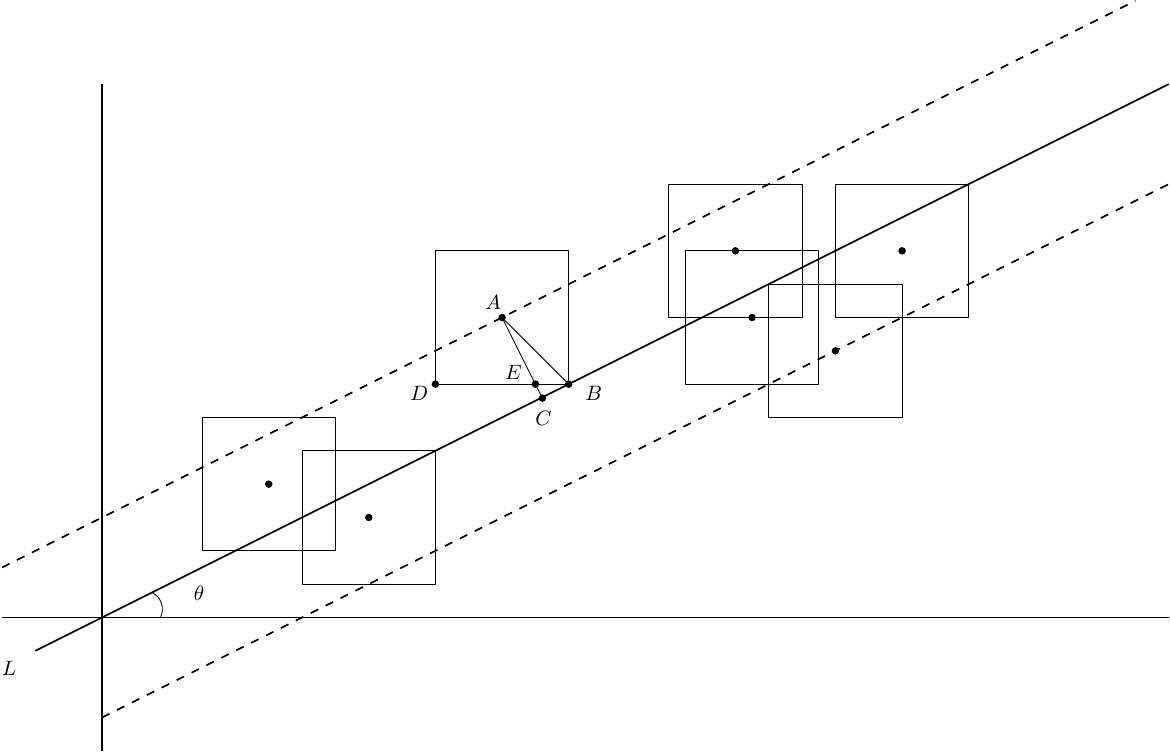}
    \caption{Here, the width of the required strip is $2AC$.}
    \label{fig:pic5}
\end{figure}

\begin{observation}\label{obx1}
    If $T$ is a strip (region bounded by two parallel lines) which is divided into two equal-width strips by the line $L$, and centers of all the squares in $S$ lie inside $T$, then the minimum width of $T$ is $\sqrt{2}\sin(45\degree+\theta)$,
\end{observation}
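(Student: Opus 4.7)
The plan is to compute the maximum possible perpendicular distance from $L$ to the center of any axis-parallel unit square that $L$ intersects; since $T$ is symmetric about $L$, the minimum width of $T$ is then exactly twice this maximum.

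First, I would fix the unit normal direction $\mathbf{n}=(-\sin\theta,\cos\theta)$ to $L$ and orthogonally project an axis-parallel unit square $s$ onto the line through its center $O$ in direction $\mathbf{n}$. Because the edges of $s$ have length $1$ and are parallel to the coordinate axes, this projection is an interval of length $|\cos\theta|+|\sin\theta|=\cos\theta+\sin\theta$ (using $0\le\theta\le 45\degree$), centered at the projection of $O$.

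Next, $L$ itself projects to a single point on this perpendicular line. Hence $L$ meets $s$ if and only if that point lies inside the projected interval; equivalently, the perpendicular distance from $O$ to $L$ is at most half the interval length, namely $(\cos\theta+\sin\theta)/2$. Conversely, this bound is attained when $L$ passes through the corner of $s$ realising the extreme end of the projection (the corner $(-1/2,+1/2)$ relative to $O$, in the range $0\le\theta\le 45\degree$). Since $L$ is the midline of $T$, the smallest strip width that captures every such center $O$ is $2\cdot(\cos\theta+\sin\theta)/2=\cos\theta+\sin\theta$, and the identity $\cos\theta+\sin\theta=\sqrt{2}\sin(45\degree+\theta)$ yields the stated value. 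In Figure~\ref{fig:pic5}, the segment $AC$ represents exactly this extremal perpendicular distance from the center of a witness square to $L$, so the width $2AC$ shown in the figure agrees with the formula.

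The argument reduces to a one-dimensional projection, so no serious obstacle arises; the only step requiring care is the simplification $|\cos\theta|+|\sin\theta|=\cos\theta+\sin\theta$, which is precisely where the assumption $0\le\theta\le 45\degree$ is used. (For $\theta$ outside this range one would need to take the appropriate signs, but the problem statement restricts to this cone.)
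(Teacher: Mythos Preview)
Your proof is correct and follows essentially the same idea as the paper: both compute the maximum perpendicular distance from the center of an intersected unit square to $L$ and double it. The paper does this by angle-chasing in the extremal right triangle (center, corner, foot of perpendicular) to get $AC=\frac{1}{\sqrt{2}}\sin(45\degree+\theta)$, whereas you phrase the same computation as the projected width $\cos\theta+\sin\theta$ of the square onto the normal direction; the two are equivalent via the identity $\cos\theta+\sin\theta=\sqrt{2}\sin(45\degree+\theta)$.
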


\begin{proof}
    Consider Figure \ref{fig:pic5}. Since the input squares are unit squares, we have $AB=\frac{1}{\sqrt{2}}$.

    Now, $\angle CBD=\theta$ and $\angle ABE=45\degree$; implying 
    $\angle ABC=45\degree+\theta$.
    
     As $\sin (45\degree+\theta)=\sin\angle ABC=\frac{AC}{AB}=\sqrt{2}AC$, we have $AC=\frac{\sin(45\degree+\theta)}{\sqrt{2}}$.
    
    $\therefore$ Width of the strip  $=2\cdot\frac{\sin(45\degree+\theta)}{\sqrt{2}}=\sqrt{2}\sin(45\degree+\theta)$.
\end{proof}
    
Since the members in $S$ are axis-parallel and intersected by $L$, here a pair of unit squares in $S$ intersect if and only if they have at least one intersection point inside the strip $T$ of width $\sqrt{2}\sin(45\degree+\theta)$ (see Observation \ref{obx1}).

We now divide the strip $T$ into square-sized boxes of side-length $\sqrt{2}\sin(45\degree+\theta)$ (see Figure \ref{fig:pic6}). As the members in $S$ are unit squares, each $s_i\in S$ may have a non-empty intersection with at most two consecutive boxes in $T$. Thus, at most $2n$ boxes may contain some portion of the squares in $S$. Let us denote these boxes as $T_1, T_2, \dots, T_m$ ($m\leq 2n$). We use $S_i(\subseteq S)$ to denote the squares whose centers lie in $T_i$.  

\begin{figure}[ht]
    \centering
    \includegraphics[width=0.9\linewidth]{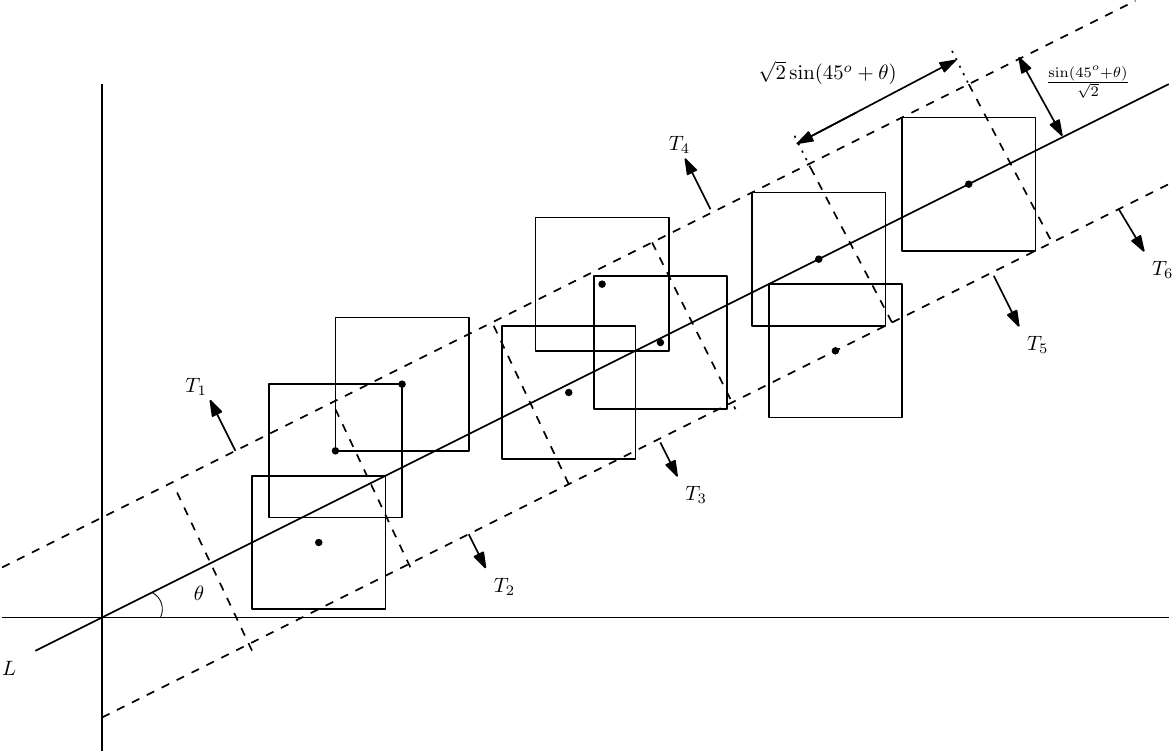}
    \caption{Demonstration of the problem}
    \label{fig:pic6}
\end{figure}

\begin{result} \label{rr} \cite{new1} 
For each $i \in \{1,2,\dots,m\}$, (a) there exists at most $4$ squares of $S$ which can dominate all the squares whose centers lie in $T_i$, and (b) the minimum number of members in $S$ that are needed to dominate all the squares intersecting the box $T_i$ is at most $12$.
\end{result}

Result \ref{rr}(a) leads to the following lemma, which determines the time complexity of our algorithm for the maximum dominating $k$-set problem for $S$.  

\begin{lemma}\label{LLL}
There exists a maximum dominating $k$-set, say $OPT$, where each box $T_i$ contains the centers of at most $11$ members of $OPT$. 
\end{lemma}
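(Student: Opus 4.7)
The plan is an exchange argument. Starting from any maximum dominating $k$-set $OPT$, I iteratively reduce the number of members inside any box that currently holds at least $12$ of them, while preserving both $|OPT|=k$ and the dominated neighborhood size; iterating, a maximum dominating $k$-set with at most $11$ members in each box results. Two ingredients drive this. The first is a geometric locality fact: any square dominated by a member of $S_i$ must lie among the squares intersecting the box $T_i$, because each input square is unit and each box has side $\sqrt{2}\sin(45\degree+\theta)$, so every intersection point with a member of $S_i$ is forced into $T_i$ (or its immediate neighborhood, captured by the set of squares intersecting $T_i$). The second is Result \ref{rr}(b), which bounds by $12$ the number of squares in $S$ needed to dominate every square intersecting $T_i$.

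Fix $T_i$ with $A:=OPT\cap S_i$ and $|A|\ge 12$, and let $B\subseteq S$ be the set of at most $12$ dominators from Result \ref{rr}(b). Form $OPT':=(OPT\setminus A)\cup B$. The containment $N[A]\subseteq N[B]$ yields $N[OPT']\supseteq N[OPT]$, and $|OPT'|\le|OPT|-|A|+|B|\le k$. Restore the size to $k$ by padding with arbitrary squares drawn from $S\setminus S_i$, preferring boxes that currently hold few $OPT$-members so as not to create new violations. The resulting maximum dominating $k$-set $OPT''$ satisfies $|OPT''\cap S_i|\le|B\cap S_i|\le 12$.

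To sharpen the bound from $12$ to $11$, I argue that $B$ cannot be entirely contained in $S_i$: it must cover a square whose center lies near the outer boundary of $T_{i-1}$ (or $T_{i+1}$), and by Result \ref{rr}(a) applied to $T_{i-1}$ such an extreme square can only be dominated by a square whose center lies in $T_{i-1}$ or earlier, never in $T_i$. Hence at least one element of $B$ must lie outside $S_i$, giving $|OPT''\cap S_i|\le 11$. Iterating this exchange over all boxes, using a potential such as $\sum_i|OPT\cap S_i|^2$ to guarantee termination (the exchange strictly decreases this potential since $|OPT|=k$ is conserved and mass is moved out of the heaviest box into lighter ones), produces the desired maximum dominating $k$-set.

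The main obstacle is squeezing the constant from $12$ down to $11$: the trivial $12$ falls out of Result \ref{rr}(b), but the improvement requires a small geometric case analysis showing that the extreme square of an adjacent box forces at least one dominator outside $T_i$. A secondary subtlety is ensuring that padding and substitution do not inflate the count of any other box past $11$, which is handled by the monotonicity of the potential and by drawing the padding from boxes that are already well below the threshold.
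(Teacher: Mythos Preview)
Your exchange argument breaks at the containment $N[A]\subseteq N[B]$. Result~\ref{rr}(b) supplies a set $B$ of at most $12$ squares dominating every square that \emph{intersects the box} $T_i$, but a member $a\in A\subseteq S_i$ can dominate squares that do not intersect $T_i$ at all. Concretely, take $\theta=0$ so the box side is $1$ and $T_i=[i,i+1]\times[-\tfrac12,\tfrac12]$: the square $a=[i+0.49,\,i+1.49]\times[-\tfrac12,\tfrac12]$ is centered in $T_i$, and $s'=[i+1.4,\,i+2.4]\times[-\tfrac12,\tfrac12]$ overlaps $a$ yet is disjoint from $T_i$. Hence $s'\in N[A]$ while $s'$ need not lie in $N[B]$, so the replacement $OPT\mapsto OPT'$ can strictly shrink the dominated neighborhood, and $OPT'$ is no longer a maximum dominating $k$-set. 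Your parenthetical hedge (``or its immediate neighborhood, captured by the set of squares intersecting $T_i$'') does not rescue this: $s'$ is precisely the square it would have to capture, and $s'$ does not intersect $T_i$. The separate ``sharpen $12$ to $11$'' step is also unsupported: nothing guarantees that the extreme square you posit in $T_{i-1}$ is present in $S$, and Result~\ref{rr}(a) says nothing about where dominators must be centered.

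The paper sidesteps this with a different exchange, driven by Result~\ref{rr}(a) rather than (b). It first fixes at most $4$ squares that dominate all of $S_j$; the remaining $\ge 8$ members of $O_j$ can then only be useful for dominating $S_{j-1}\cup S_{j+1}$, and by pigeonhole at least $4$ of them serve one side, say $S_{j-1}$. Those are swapped for at most $4$ squares drawn from $S_{j-1}$ that dominate \emph{all} of $S_{j-1}$, which preserves (indeed enlarges) the dominated neighborhood because the replacements cover everything the discarded squares could possibly have covered on that side. The violation may now shift to $T_{j-1}$, and the argument cascades leftward until a genuine saving in $|OPT|$ is realized at $T_1$, contradicting optimality. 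If you want to repair your approach, you would need to take $B$ to dominate $S_{i-1}\cup S_i\cup S_{i+1}$ (which is what $N[A]$ is actually contained in), but then Result~\ref{rr}(a) only gives $|B|\le 12$, and you are back to the paper's cascading argument to squeeze out the constant $11$.
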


\begin{proof}[By contradiction]
Let $OPT$ be an optimum solution to the said problem with at least one box in $T$ having the centers of more than $11$ squares of $OPT$. Let $T_j$ ($j\geq 1$) be the leftmost among such boxes containing the centers of the squares $O_{j} \subseteq OPT$, $|O_j|\geq 12$ (here, $O_j=S_j\cap OPT$). By Result \ref{rr}(a), we can replace/choose at most $4$ squares which can dominate all the squares with center inside $T_j$. Thus, the remaining at least $8$ squares in $O_j$ are used to dominate squares centered inside $T_{j-1} \cup T_{j+1}$.
By Result \ref{rr}(a), instead of these remaining squares of $O_j$, we can take $4$ squares from $S_{j-1}$ and $4$ from $S_{j+1}$ to dominate all the squares centered in $T_{j-1} \cup T_{j+1}$. Also, by pigeonhole principle, among the remaining at least $8$ squares in $O_j$, at least $4$ will contribute in dominating elements of $S_{j-1}$ (or $S_{j+1}$).

Without loss of generality, assume that $O_j'\subseteq O_j$ ($|O_j'|\geq 4$) is used to dominate the squares in $S_{j-1}$. We delete $O_j'$ from $OPT$, and add at most $4$ squares, say $O_j''$ from $S_{j-1}$, centered at four sub-boxes of $T_{j-1}$, in $OPT$ to dominate all the squares in $S_{j-1}$. These, in addition, may dominate some squares in $S_{j-2}$. Thus, we now have $\tilde{O}_{j-1}=O_{j-1}\cup O_j''$, and the total size of $OPT$ is not increased. Also, $|O_j|$ became $<12$. If $|\tilde{O}_{j-1}|\geq 12$, then again, we apply the same technique to delete some squares from $\tilde{O}_{j-1}$ and add the required number ($\leq 4$) squares from $S_{j-2}$ without increasing the initial size of $OPT$. The process continues up to the box $T_1$. If the size of $\tilde{O}_1$ (after adding some new squares) is $\geq 11$, we can replace it by at most $4$ squares. 
Thus, the size of $OPT$ is reduced, maintaining the size of the dominated squares in $S_1\cup \ldots, \cup S_j$ unchanged or increased, while the number of squares in $O_{j+1} \cup \ldots\cup O_m$ remains unchanged. Thus, we have a contradiction that the chosen solution $OPT$ is not optimal.
\end{proof}

Using Lemma \ref{LLL}, we construct a dynamic programming-based algorithm for the maximum dominating $k$-set problem.
\begin{itemize}
    \item $S_i^l$: a subset of $S_i$ of size $l$.
    \item $d(S_i^l)$: set of all squares dominated by $S_i^l$.
    \item $d_c(S_i^{l_1}\cup S_{i+1}^{l_2},S_{i+2}^{l_3})$: set of all squares dominated by both $S_i^{l_1}\cup S_{i+1}^{l_2}$ and $S_{i+2}^{l_3}$.
\end{itemize}
For a specific choice of the tuple $(l_1,l_2,S_{i-1}^{l_3},S_i^{l_4})$, where $l_1$, $l_2$, $l_3$ and $l_4$ are non-negative integers, $S_{i-1}^{l_3}$ is a subset of size $l_3$ of $S_{i-1}$ and $S_i^{l_4}$ is a subset of size $l_4$ of $S_i$, we denote by $N(l_1,l_2,S_{i-1}^{l_3},S_i^{l_4})$ the maximum dominated neighborhood size of a $(l_1+l_2+l_3+l_4)$-set, where $S_{i-1}^{l_3}$ is a subset of $l_3$ elements chosen from $S_{i-1}$, $S_i^{l_4}$ is a subset of $l_4$ elements chosen from $S_i$, a subset of $l_2$ elements is chosen from $S_{i-2}$ and a  subset of $l_1$ elements is chosen from $\cup_{j=1}^{i-3} S_j$ in the $(l_1+l_2+l_3+l_4)$-set.
With these notations, the recursion formula for $N(l_1,l_2,S_{i-1}^{l_3},S_i^{l_4})$ is

$N(l_1,l_2,S_{i-1}^{l_3},S_i^{l_4}) = 
\max \{ N(l_1-a,a,S_{i-2}^{l_2},S_{i-1}^{l_3}) + d(S_i^{l_4}) - d_c(S_{i-2}^{l_2}\cup S_{i-1}^{l_3},S_i^{l_4}): S_{i-2}^{l_2}\subseteq S_{i-2}$ and $|S_{i-2}^{l_2}|=l_2$, and $0\leq a \leq \min\{l_1,11\}$ is an integer$\}$.

Lemma \ref{LLL} says that $l_2,l_3,l_4\leq 11$ for any choice of the tuple $(l_1,l_2,S_{i-1}^{l_3},S_i^{l_4})$. The choice for $l_1$ can be any integer in $\{0,1,\dots,k\}$.

In the preprocessing step of the dynamic programming algorithm, the values of the functions $d$ and $d_c$ are calculated for all the required subsets of $S$. Next, each box $T_i$ is processed for $i=1,2,\dots,m$, in this order. We calculate and store all possible values of $N(l_1,l_2,S_{i-1}^{l_3},S_i^{l_4})$ using the recursion formula. Finally, the optimum maximum dominated neighborhood size $OPT$ of a $k$-set is obtained as 

$OPT = \max \{ N(l_1,l_2,S_{m-1}^{l_3},S_{m}^{l_4}): S_{m-1}^{l_3}\subseteq S_{m-1}$ with $|S_{m-1}^{l_3}|=l_3$, $S_m^{l_4}\subseteq S_m$ with  $|S_m^{l_4}|=l_4$, and $l_1+l_2+l_3+l_4=k, 0\leq l_2, l_3, l_4\leq 11 \}$.

Note that for the initialization of the dynamic programming, we add four dummy boxes $T_{-3},T_{-2},$ $T_{-1}$ and $T_0$ respectively just before the box $T_1$ and initialize $N(0,0,\emptyset,\emptyset)=0$ and all other entries which are undefined are initialized to $-\infty$.

\textbf{Justification of the recursion formula:} 

While processing $T_i$, when we choose $l_4$ elements of the set $S_{i}^{l_4}$ from $i$-th box $T_i$, it dominates $d(S_{i}^{l_4})$ elements. Now, there are only three possibilities,

\begin{enumerate}
    \item None of these $d(S_i^{l_4})$ elements were previously dominated. In this case, the last chosen squares can have centers in at most $T_{i-3}$; not in any later box.
    
    \item Some of these $d(S_i^{l_4})$ elements were already dominated, but none of the $l_4$ elements of $S_i^{l_4}$ were dominated. In this case, the last chosen squares have centers in the box $T_{i-2}$.
    
    \item Some of the $l_4$ elements of $S_i^{l_4}$ were already dominated. Here, the last chosen squares have centers in the box $T_{i-1}$.
\end{enumerate}

Correctness follows from the fact that the above three cases have been taken care of in the 
recursion formula.

\textbf{Time complexity:} Calculating $N(l_1,l_2,S_{i-1}^{l_3},S_i^{l_4})$ for a given $(l_1,l_2,S_{i-1}^{l_3},S_i^{l_4})$ requires $O(n^{11})$ time. For each $i$, number of tuples of the form $(l_1,l_2,S_{i-1}^{l_3},S_i^{l_4})$ is $O(kn^{22})$ and there are at most $2n$ possible values of $i$. Hence, the time needed to calculate the function $N$ for all possible arguments is $O(kn^{34})$. Finally, calculating $OPT$ requires $O(n^{11}.n^{11})=O(n^{22})$ time. Hence, the time complexity of the algorithm is $O(kn^{34})+O(n^{22}) = O(kn^{34})$, a polynomial in $n$ (since $k\leq n$).

\begin{figure}[ht]
    \centering
    \includegraphics[width=0.66\linewidth]{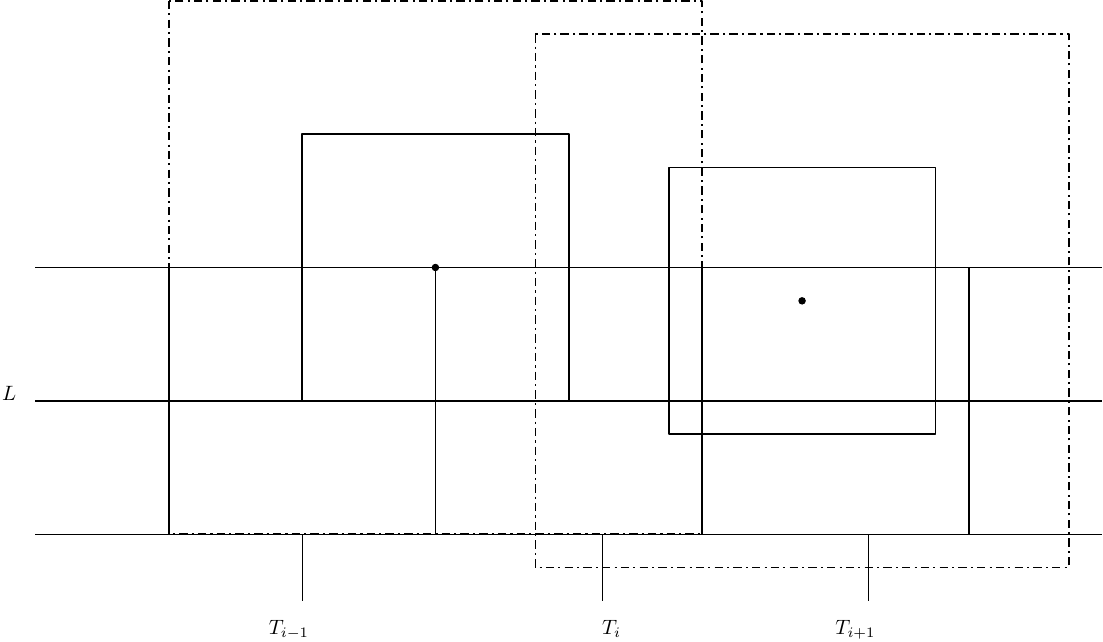}
    \caption{Forbidden regions of unit squares when $\theta=0\degree$}
    \label{fig:pic7}
\end{figure}

\begin{remark}
    Observe that when $\theta=0\degree$, then each box $T_i$ cannot contain centers of two or more non-intersecting squares (See Figure \ref{fig:pic7}). Because for any one square with center in $T_i$, the forbidden region (i.e., the region where the center of any square which does not intersect the given square cannot lie) of that square fully covers the box $T_i$. Thus, in this case, the time complexity of the dynamic programming algorithm reduces to $O(kn^7)$. 
    
    Further, for this case (i.e., when $\theta=0\degree$), the intersection graph of the input unit squares is nothing but the unit interval graph where the unit intervals are the projections of the input squares on the line $L$. So, the maximum dominating $k$ set can be found in $O(nk\log n)$ time in this case. Hence, the time complexity further reduces.
\end{remark}

\begin{figure}[ht]
    \centering
    \includegraphics[width=0.9\linewidth]{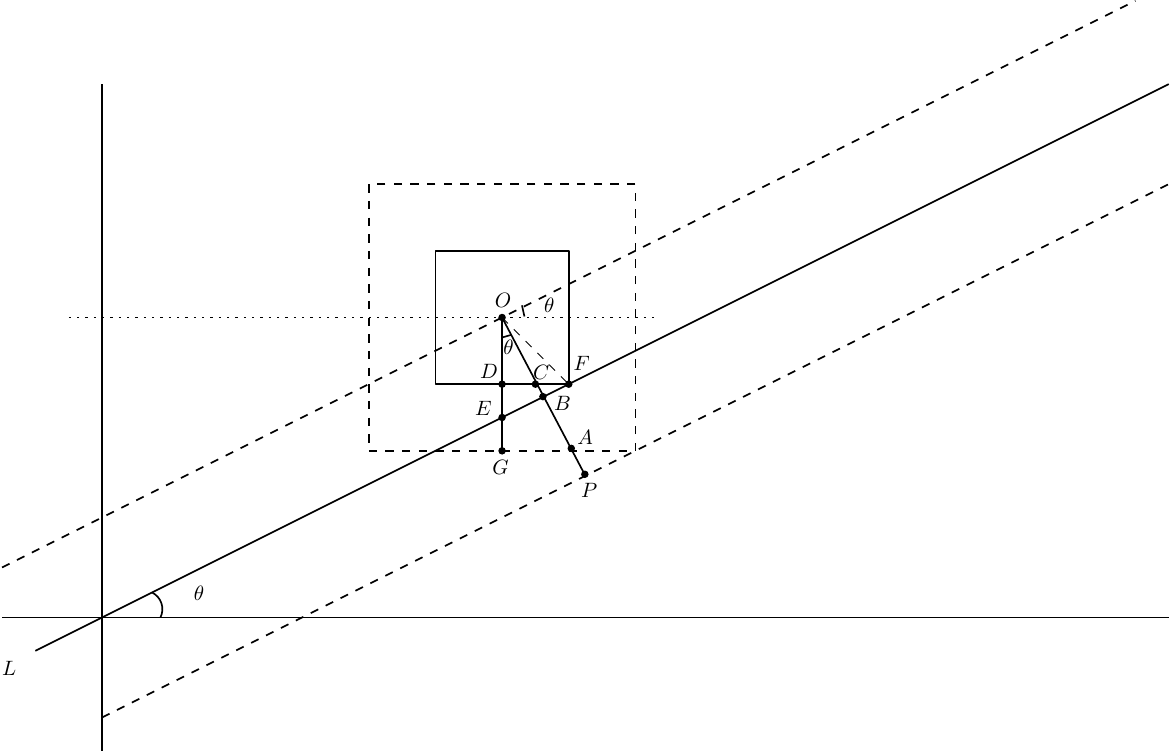}
    \caption{When $0\degree<\theta<45\degree$, $OA<OP$ always holds.}
    \label{fig:pic8}
\end{figure}

\begin{remark}
    When $0\degree<\theta<45\degree$, a box $T_i$ can contain centers of $4$ non-intersecting squares.
\end{remark}

\begin{proof}
    See Figure \ref{fig:pic8}. 
    Here, $h=$ side length of $T_i$ $=OP=2\cdot OB=\sqrt{2}\cdot\sin(45\degree+\theta)$.

    The length of the forbidden region along the line $OP$ is $OA=l$ (say), where we assume $OP$ is one side of the box $T_i$.
    
    Now, $\cos\angle AOG=\frac{OG}{OA}=\frac{1}{OA}$
     $\implies$ $\cos\theta=\frac{1}{OA}$.
     Therefore, we have the following two equations: (i) $l=\frac{1}{\cos\theta}$, and (ii) $h=\sqrt{2}\cdot\sin(45\degree+\theta)$,
     where $\theta\in [0\degree,45\degree]$.

     But, $\forall\theta$ with $0\degree<\theta<45\degree$, we have $l<h$. 
     Hence, a box $T_i$ can contain centers of $4$ non-intersecting squares (See Figure \ref{fig:pic9}).
\end{proof}

\begin{figure}[ht]
    \centering
    \includegraphics[width=0.6\linewidth]{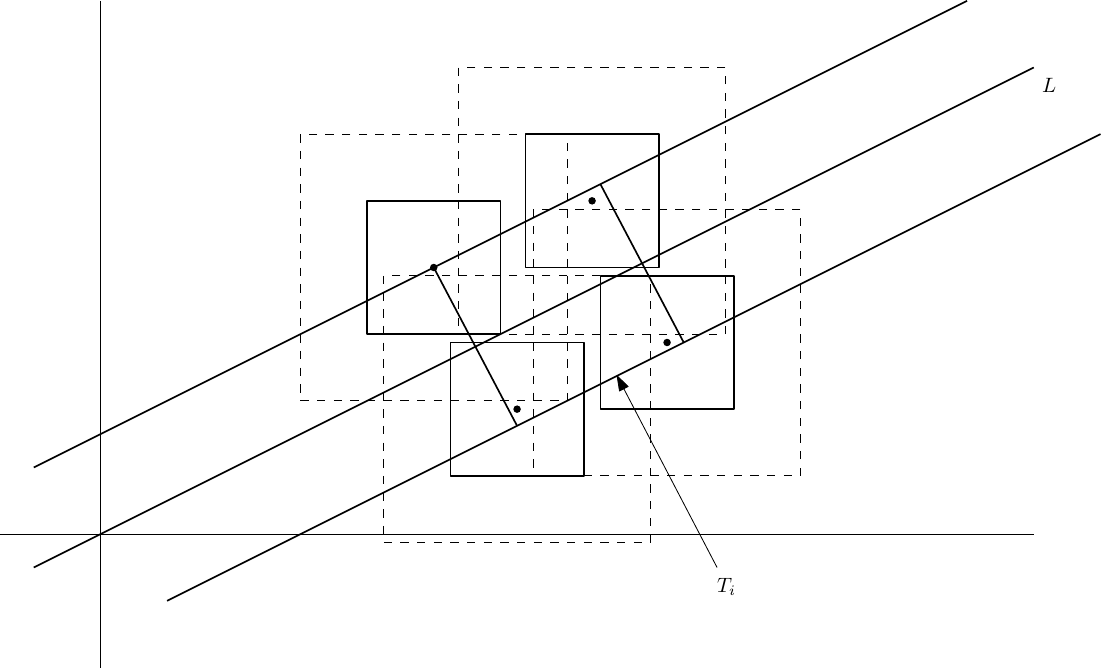}
    \caption{A box $T_i$ can contain centers of $4$ non-intersecting squares.}
    \label{fig:pic9}\vspace{-0.1in}
\end{figure}

\begin{remark}
    When the angle $\theta$ (the angle between the straight line $L$ and positive $x$-axis) does not lie in the range $0\degree\leq\theta\leq 45\degree$, the problem can be dealt similarly, because all other cases can be reduced to this case (sometimes considering angle with $y$-axis instead of $x$-axis).
\end{remark}

\section{Axis parallel unit height rectangles intersected by a straight line}
\label{rectangle}

We now consider the intersection graph of unit-height (assuming width at least 1 unit) axis-parallel rectangles intersected by a straight line $L$. We consider the two cases: (i) $L: y=x$, and (ii) $L: y=mx$, where $m=\tan\theta$, $ 0\degree<\theta<45\degree$ separately.

\subsection{Case (i) $L:y=x$}
Let $R=\{\rho_1,\rho_2,\dots,\rho_n\}$ be the set of input rectangles and the intersecting line is $L:y=x$. 
Consider a strip $T$ of width $\sqrt{2}$ which is divided into two equal-width strips by the straight line $L$. Now, by Lemma 3 of \cite{new1}, two rectangles $\rho_i$ and $\rho_j$ intersect if and only if there exists an intersection point of them inside the strip. Also, the rectangles can have only six types of portions inside the strip (see \cite{new1}). 

Observe that each rectangle contains an unit square whose center lies inside the strip and the unit square corresponding to each rectangle intersects the line $L$. The center of such a unit square is considered to be the center of the corresponding rectangle as well as the center of the corresponding portion. Thus, with each rectangle $\rho_i\in R$, a center is being associated (See Figure \ref{fig:pic10}). 

\begin{figure}[ht]
    \centering
    \includegraphics[width=0.75\linewidth]{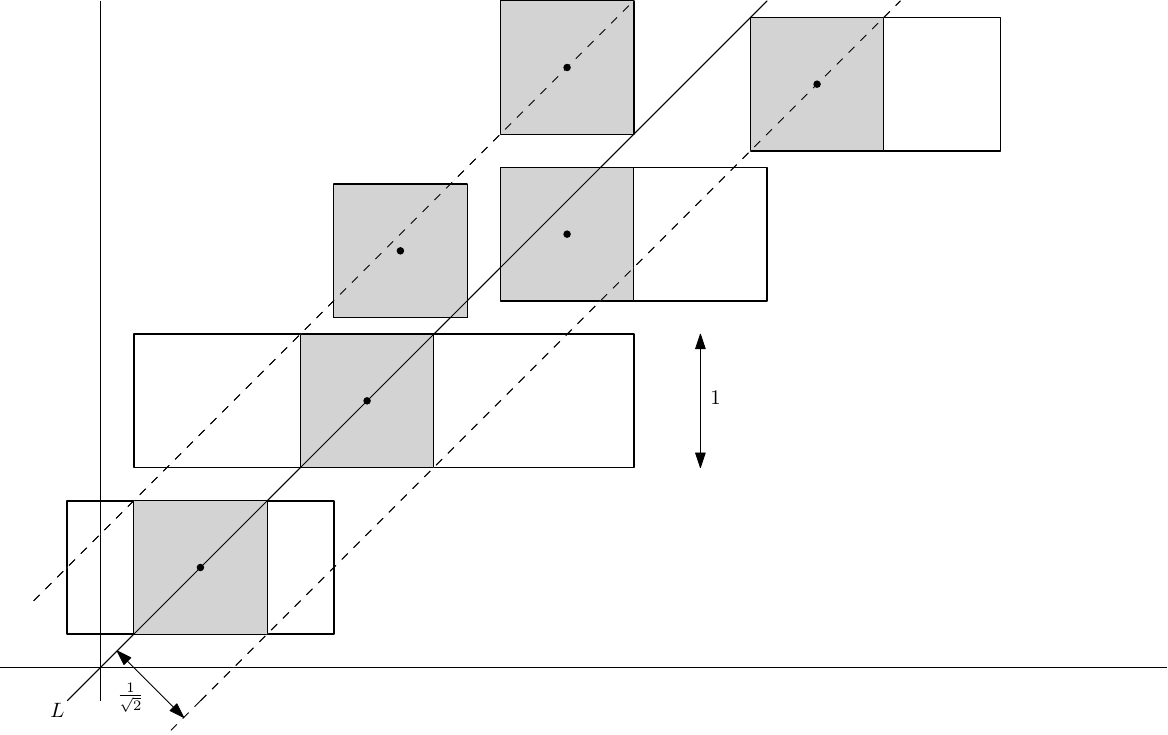}
    \caption{The rectangles, their corresponding unit squares, and their centers}
    \label{fig:pic10}\vspace{-0.1in}
\end{figure}

With this setup, we divide $T$ into rectangular boxes where the width of each rectangular box is $2\sqrt{2}$. Hence, due to the mentioned construction of the rectangular boxes, each input rectangle can intersect at most two consecutive boxes. Therefore, at most $2n$ such boxes can contain some portion of the rectangles in $R$. Let us denote these boxes by $T_1,T_2,\dots,T_m$, where $m\leq 2n$. Also, the centers of all the portions lie inside these rectangular boxes. Let us denote by $R_i$ ($\subseteq R$) the set of rectangles whose corresponding centers lie in box $T_i$. Now, we have the following result.

\begin{lemma} \label{lx1}
    For each $i\in\{1,2,\dots,m\}$, at most $8$ members of $R$ are needed to dominate all the members of $R_i$. 
\end{lemma}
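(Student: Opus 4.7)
The plan is to adapt the sub-box covering argument behind Result~\ref{rr}(a) to the rectangle setting by exploiting the fact that every $\rho_i \in R$ contains an axis-parallel unit square $s_i$ whose center $c_i$ lies in the strip $T$ and whose interior is crossed by $L$; this $c_i$ is by definition the center of $\rho_i$, and it determines which box $T_i$ contains $\rho_i$. Since $s_i \subseteq \rho_i$, whenever $s_p \cap s_q \neq \emptyset$ we also have $\rho_p \cap \rho_q \neq \emptyset$. Hence it suffices to find at most eight rectangles of $R$ whose associated unit squares together dominate the associated unit squares of every member of $R_i$.

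Working in the coordinate system $(u,v)$ aligned with $L$, the box $T_i$ is a $2\sqrt{2} \times \sqrt{2}$ rectangle. I would partition it into $4 \times 2 = 8$ congruent sub-boxes of side $\tfrac{\sqrt{2}}{2}$. For any two centers $c_p = (u_p, v_p)$ and $c_q = (u_q, v_q)$ lying in the same sub-box we have $|u_p - u_q| \leq \tfrac{\sqrt{2}}{2}$ and $|v_p - v_q| \leq \tfrac{\sqrt{2}}{2}$. Converting back with $x = (u - v)/\sqrt{2}$ and $y = (u + v)/\sqrt{2}$ yields $|x_p - x_q| \leq 1$ and $|y_p - y_q| \leq 1$, the exact condition for two axis-parallel unit squares centered at $c_p$ and $c_q$ to overlap. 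Thus, for each of the at most eight sub-boxes that actually contains the center of some rectangle in $R_i$, a single representative rectangle of $R_i$ whose center lies in that sub-box dominates every other rectangle of $R_i$ with center in the same sub-box, giving a dominating collection of size at most eight.

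The step I expect to need the most care is the coordinate conversion above: one must check that the $\tfrac{\sqrt{2}}{2}$ rotated-coordinate bounds really translate tightly to the unit axis-aligned bounds on both $x$ and $y$ needed for unit-square overlap, and that on-boundary ties still permit a valid representative to be chosen. Beyond that, the argument is a direct generalization of Result~\ref{rr}(a): the boxes here are twice as long along $L$ as the $\sqrt{2} \times \sqrt{2}$ unit-square boxes of Section~\ref{square}, so the partition count, and hence the dominating bound, doubles from four to eight.
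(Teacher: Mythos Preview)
Your proposal is correct and takes essentially the same approach as the paper: partition $T_i$ into eight $\tfrac{1}{\sqrt{2}} \times \tfrac{1}{\sqrt{2}}$ sub-boxes and pick one representative rectangle per nonempty sub-box. The paper's version is marginally more direct---it simply observes that the Euclidean diameter of each sub-box equals $1$, so two centers in the same sub-box are at distance at most $1$, which (via the contained unit squares) forces the rectangles to intersect---thereby bypassing your coordinate conversion, but the substance is identical.
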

\begin{proof}
    For a fixed $i$, consider the rectangular box $T_i$. Its side lengths are $\sqrt{2}$ and $2\sqrt{2}$. We divide the box $T_i$ into $8$ square-sized sub-boxes, each having side length $\frac{1}{\sqrt{2}}$ (See Figure \ref{fig:pic11}). Observe that, the distance of every pair of points inside a sub-box is at most $1$. 

    Now, since each input rectangle has its width greater than or equal to 1, the minimum distance between the centers of two non-intersecting portions will be strictly greater than 1. Implying, one sub-box cannot contain the centers of two or more non-intersecting portions. 
    Hence, at most $8$ rectangles will dominate all the elements of $R_i$.
\end{proof}

\begin{figure}[ht]
    \centering
    \includegraphics[width=0.75\linewidth]{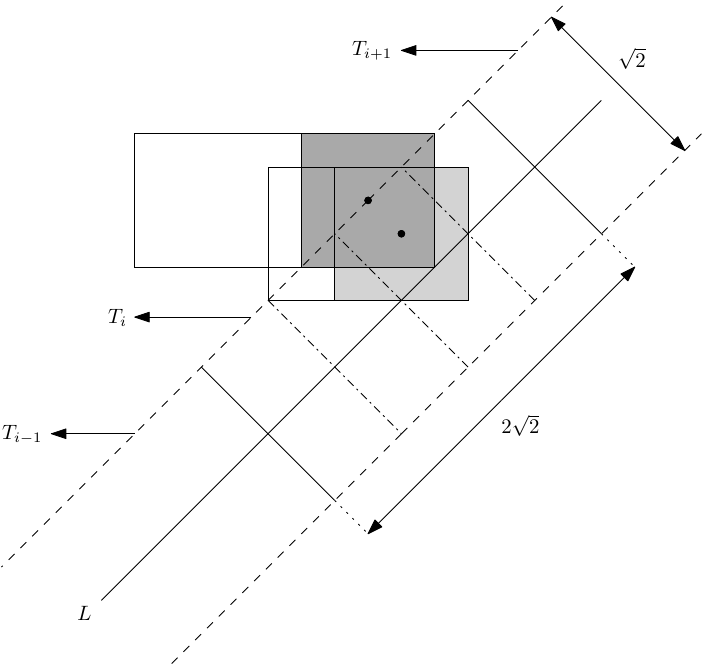}
    \caption{Demonstration of the construction}
    \label{fig:pic11}
\end{figure}

\begin{corollary} \label{corx1}
    At most $24$ members of $R$ are required to dominate all the rectangles intersecting a box $T_i$.
\end{corollary}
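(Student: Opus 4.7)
The plan is to derive the corollary as a direct consequence of Lemma~\ref{lx1} together with the geometric observation, made just before Lemma~\ref{lx1}, that every input rectangle can intersect at most two consecutive boxes among $T_1,\dots,T_m$. My first step is to identify exactly which rectangles in $R$ can possibly intersect the fixed box $T_i$. Since each rectangle $\rho\in R$ contributes a portion whose center lies in a single box $T_j$, and since $\rho$ extends into at most one neighboring box, the set of rectangles whose body meets $T_i$ is contained in $R_{i-1}\cup R_i\cup R_{i+1}$.

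From there, the argument is immediate: dominate each of the three index sets separately. By Lemma~\ref{lx1}, there exists a subset of at most $8$ members of $R$ dominating every rectangle in $R_i$; likewise at most $8$ members of $R$ for $R_{i-1}$, and at most $8$ for $R_{i+1}$. Taking the union of these three dominating collections gives a set of at most $8+8+8 = 24$ rectangles that dominates every member of $R_{i-1}\cup R_i\cup R_{i+1}$, and hence every rectangle intersecting $T_i$.

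I expect no real obstacle here; the only minor point to be careful about is the boundary case $i\in\{1,m\}$, where one of the adjacent boxes is absent and the bound becomes strictly smaller than $24$. So $24$ remains a valid uniform upper bound across all $i$, which is exactly what the corollary asserts.
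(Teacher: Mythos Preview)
Your proof is correct and follows essentially the same approach as the paper: the paper's own argument simply observes that the rectangles intersecting $T_i$ form a subset of $R_{i-1}\cup R_i\cup R_{i+1}$ and then invokes Lemma~\ref{lx1} on each piece. Your version just spells out the $8+8+8$ count and the boundary case more explicitly.
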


\begin{proof}
    All the rectangles intersecting the box $T_i$ is a subset of $R_{i-1}\cup R_{i}\cup R_{i+1}$. Thus, the result follows from Lemma \ref{lx1}. 
\end{proof}

\begin{lemma}
\label{rect}
    There exists a maximum dominating $k$-set, denoted by $OPT$ (say), for which each rectangular box $T_i$ contains the centers of at most $23$ elements of $OPT$. 
\end{lemma}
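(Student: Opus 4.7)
The plan is to mirror the contradiction argument of Lemma \ref{LLL} with the constants scaled up to reflect the rectangle setting: Lemma \ref{lx1} supplies at most $8$ dominators of $R_i$ (replacing the constant $4$ that came from Result \ref{rr}(a)), and the threshold $12$ is replaced by $24$. Assume for contradiction that no maximum dominating $k$-set satisfies the claimed bound. Among all violating optima, pick $OPT$ whose profile $(|OPT\cap R_1|,|OPT\cap R_2|,\ldots,|OPT\cap R_m|)$ is lexicographically smallest from the left, and let $T_j$ be the leftmost box with $|O_j|\geq 24$, where $O_j = OPT\cap R_j$.

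By Lemma \ref{lx1}, at most $8$ rectangles of $R$ suffice to dominate every member of $R_j$. Reserving $8$ elements of $O_j$ to play this role, the remaining at least $16$ rectangles of $O_j$ must earn their place by contributing to the domination of rectangles whose centers lie in $T_{j-1}\cup T_{j+1}$. The pigeonhole principle then yields a subset $O_j'\subseteq O_j$ with $|O_j'|\geq 8$ that contributes to the domination of the same side, say $T_{j-1}$ without loss of generality. A second application of Lemma \ref{lx1} supplies $O_j''\subseteq R_{j-1}$ with $|O_j''|\leq 8$ that dominates all of $R_{j-1}$. Removing $O_j'$ from $OPT$ and inserting $O_j''$ keeps $|OPT|$ non-increased, preserves or enlarges the dominated count within $R_1\cup\cdots\cup R_j$ (since $R_{j-1}$ is now fully covered), and leaves the portion beyond $T_j$ untouched. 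Now $|OPT\cap R_j|$ has strictly dropped, while $|OPT\cap R_{j-1}|$ may have grown.

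If the enlarged $|OPT\cap R_{j-1}|$ crosses $24$, repeat the same swap to push the surplus into $T_{j-2}$, and so on leftward. The process must reach $T_1$, where any residual surplus $|\tilde O_1|\geq 24$ can be replaced by at most $8$ dominators of $R_1$ (again via Lemma \ref{lx1}), strictly reducing $|OPT|$. Padding the resulting shorter set with arbitrary vertices up to size $k$ can only increase the dominated neighborhood, yielding a $k$-set that dominates at least as many rectangles as the original $OPT$ but whose profile is strictly smaller to the left---contradicting the extremal choice of $OPT$.

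The main obstacle is verifying that each leftward push is genuine progress and does not create a new offender to the right of $T_j$; this is handled by drawing $O_j''$ from $R_{j-1}$ so that only $|OPT\cap R_{j-1}|$ is affected, and by the fact that our choice of the leftmost violating box $T_j$ combined with the left-lexicographic minimality of the profile prevents cycling. A secondary check is that the domination previously routed through $O_j'$ into $R_{j-1}$ is fully restored after the swap, which is immediate from the defining property of $O_j''$.
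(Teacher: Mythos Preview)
Your proof is correct and takes essentially the same approach as the paper: the paper's own proof of Lemma~\ref{rect} merely instructs the reader to rerun the argument of Lemma~\ref{LLL} with Lemma~\ref{lx1} (and Corollary~\ref{corx1}) in place of Result~\ref{rr}, which is precisely what you have carried out with the constants $4\to 8$ and $12\to 24$. Your addition of the left-lexicographic extremal choice of $OPT$ is a clean way to formalize the termination of the leftward cascade that the paper leaves implicit.
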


\begin{proof}
    The result can be proved using Corollary \ref{corx1} and Lemma \ref{lx1}, and following the proof technique of Lemma \ref{LLL} in Section \ref{square}. 
\end{proof}

\textbf{Dynamic programming algorithm:} We need the following notations to design the algorithm for the maximum dominating $k$-set problem for the intersection graph of a set of axis-parallel unit-height rectangles intersected by the straight line $L : y=x$.

\begin{itemize}
    \item $R_i^l$: a subset of $R_i$ of size $l$.
    \item $d(R_i^l)$: set of all rectangles dominated by $R_i^l$.
    \item $d_c(R_i^{l_1}\cup R_{i+1}^{l_2},R_{i+2}^{l_3})$: set of all rectangles dominated by both $R_i^{l_1}\cup R_{i+1}^{l_2}$ and $R_{i+2}^{l_3}$.
\end{itemize}

For a specific choice of the tuple $(l_1,l_2,R_{i-1}^{l_3},R_i^{l_4})$, where $l_1$, $l_2$, $l_3$ and $l_4$ are non-negative integers, we denote by $N(l_1,l_2,R_{i-1}^{l_3},R_i^{l_4})$ the maximum dominated neighborhood size of a $(l_1+l_2+l_3+l_4)$-set, where $R_{i-1}^{l_3}$ is a subset of $l_3$ elements chosen from $R_{i-1}$, $R_i^{l_4}$ is a subset of $l_4$ elements chosen from $R_i$, a subset of $l_2$ elements is chosen from $R_{i-2}$ and a subset of $l_1$ elements is chosen from $\cup_{j=1}^{i-3} R_j$ in that $(l_1+l_2+l_3+l_4)$-set. With these notations, the recursion formula for $N(l_1,l_2,R_{i-1}^{l_3},R_i^{l_4})$ is

$N(l_1,l_2,R_{i-1}^{l_3},R_i^{l_4})$ = 
$\max \{ N(l_1-a,a,R_{i-2}^{l_2},R_{i-1}^{l_3}) + d(R_i^{l_4}) - d_c(R_{i-2}^{l_2}\cup R_{i-1}^{l_3},R_i^{l_4}) :  R_{i-2}^{l_2}\subseteq R_{i-2}$ with $|R_{i-2}^{l_2}|=l_2$, and $a$ is an integer satisfying $0\leq a \leq \min\{l_1,23\}\}$.

By Lemma \ref{rect}, we have $l_2,l_3,l_4\leq 23$ for any choice of the tuple $(l_1,l_2,R_{i-1}^{l_3},R_i^{l_4})$. The choice for $l_1$ can be any integer in $\{0,1,\dots,k\}$.

In the preprocessing step of the dynamic programming algorithm, the values of the functions $d$ and $d_c$ are calculated for all the required subsets of $R$. Next, each box $T_i$ is processed for $i=1,2,\dots,m$, in this order. We use the recursion formula to calculate and store all possible values of $N(l_1,l_2,R_{i-1}^{l_3},R_i^{l_4})$. Finally, the optimal maximum dominated neighborhood size $OPT$ of a $k$-set is obtained as 

$OPT = \max \{ N(l_1,l_2,R_{m-1}^{l_3},R_{m}^{l_4}): R_{m-1}^{l_3}\subseteq R_{m-1}$ with $|R_{m-1}^{l_3}|=l_3$, $R_m^{l_4}\subseteq R_m$ with $|R_m^{l_4}|=l_4$ and $l_1+l_2+l_3+l_4=k, 0\leq l_2, l_3, l_4\leq 23 \}$.

The initialization of the dynamic programming and proof of correctness of the recursion formula are similar to those in the case of unit squares (see  Section \ref{square}).

\textbf{Time complexity:} Calculating $N(l_1,l_2,R_{i-1}^{l_3},R_i^{l_4})$ for a given $(l_1,l_2,R_{i-1}^{l_3},R_i^{l_4})$ requires $O(n^{23})$ time. For each $i$, number of tuples of the form $(l_1,l_2,R_{i-1}^{l_3},R_i^{l_4})$ is $O(kn^{46})$ and there are at most $2n$ possible values of $i$. Hence, the time needed to calculate the function $N$ for all possible arguments is $O(kn^{70})$. Finally, calculating $OPT$ requires $O(n^{23}.n^{23})=O(n^{46})$ time. Hence, the worst case time complexity of the algorithm is $O(kn^{70})+O(n^{46}) = O(kn^{70})$.

\subsection{Case (ii) $L:y=mx$, where $m=\tan\theta$, $0\degree<\theta<45\degree$}
Here we show that if the straight line $L$ makes the angle $\theta$ with the positive $x$-axis (in anticlockwise direction) and $0\degree<\theta<45\degree$, then the method of finding the width of the required strip $T$ needs to be modified.

\begin{observation}
    If we take a strip of width $\sqrt{2}\sin(45\degree+\theta)$ around the straight line $L$ as in Section \ref{square}, then the domination information inside the strip need not always give the optimal solution.
\end{observation}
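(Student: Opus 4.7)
My plan is to prove this observation by producing an explicit counterexample, since the claim is a negative one about the natural adaptation of the Section~\ref{square} strip.

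First, I would parametrize a unit-height axis-parallel rectangle of width $w \geq 1$ intersected by $L : y = x\tan\theta$, and compute the maximum perpendicular distance of its center from $L$. The standard support-function argument gives this value as $\tfrac{1}{2}(\cos\theta + w\sin\theta)$, whereas the half-width of the strip of Section~\ref{square} is $\tfrac{\sqrt{2}}{2}\sin(45\degree+\theta) = \tfrac{1}{2}(\cos\theta + \sin\theta)$. Since $\sin\theta > 0$ for $\theta \in (0\degree,45\degree)$, any $w > 1$ allows the center of such a rectangle to lie strictly outside the strip. Consequently, input rectangles of width larger than $1$ can have centers outside the strip, and the Section~\ref{square} bookkeeping---which indexes each object by the box of the strip that contains its center---cannot even see them.

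Second, I would assemble a concrete instance, for example with $\theta = 30\degree$ and a handful of unit-height rectangles of width $3$ whose centers sit outside the strip of width $\sqrt{2}\sin(75\degree)$. Arranging these rectangles so that (i) each is intersected by $L$, (ii) each of them uniquely dominates some other input rectangle, and (iii) every $k$-set drawn from rectangles with centers inside the strip misses at least one such unique dominator, I would show that the strip-restricted algorithm yields a strictly smaller dominated neighborhood than the true optimum. Choosing $k$ equal to the number of such \emph{wide} rectangles makes the gap concrete.

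The main obstacle is designing the instance carefully so that the sub-optimality gap is demonstrable rather than merely a missed edge. Beyond this, the proof should articulate why the failure is structural: a unit-height rectangle of width $w$ can have its center in a strip of half-width $\tfrac{1}{2}(\cos\theta + w\sin\theta)$ that grows with $w$, so no strip of fixed width $\sqrt{2}\sin(45\degree+\theta)$ contains all centers; moreover, a single box of such a strip may hold arbitrarily many pairwise non-intersecting centers, breaking the ``bounded centers per box'' argument underlying Lemma~\ref{LLL}. Both failures motivate the alternative strategy adopted in Section~\ref{rectangle}.
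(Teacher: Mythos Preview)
Your plan takes a different route from the paper and is considerably more elaborate than what is needed. The paper's argument is a two-line counterexample: exhibit two unit-height rectangles that overlap, but whose intersection region lies entirely outside the strip of half-width $\tfrac{\sqrt{2}}{2}\sin(45\degree+\theta)$; restricted to the strip they look disjoint, so the strip-only adjacency information gives a minimum dominating set of size~$2$, whereas the true optimum is~$1$. That is the entire proof.

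The phrase ``domination information inside the strip'' refers to the adjacency relation obtained by looking only at intersections occurring within the strip---recall the pivotal sentence in Section~\ref{square}: ``a pair of unit squares in $S$ intersect if and only if they have at least one intersection point inside the strip $T$.'' The observation asserts precisely that this equivalence fails for rectangles when $0\degree<\theta<45\degree$. Your proposal instead targets the center-to-box assignment, arguing that geometric centers can escape the strip. That support-function computation is correct and is morally why the paper later widens the strip to $2\cos\theta$, but it does not establish the present observation: a rectangle whose geometric center lies outside the strip still has a portion inside it (since it meets $L$), so it is still ``seen'' by the strip; what is lost is an \emph{edge} of the intersection graph, not a \emph{vertex}. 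Consequently your step of restricting to ``$k$-sets drawn from rectangles with centers inside the strip'' does not model what the strip-based reasoning actually does, and the multi-rectangle instance you plan to build is unnecessary---a single pair of rectangles already suffices.
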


\begin{figure}[ht]
    \centering
    \includegraphics[width=0.66\linewidth]{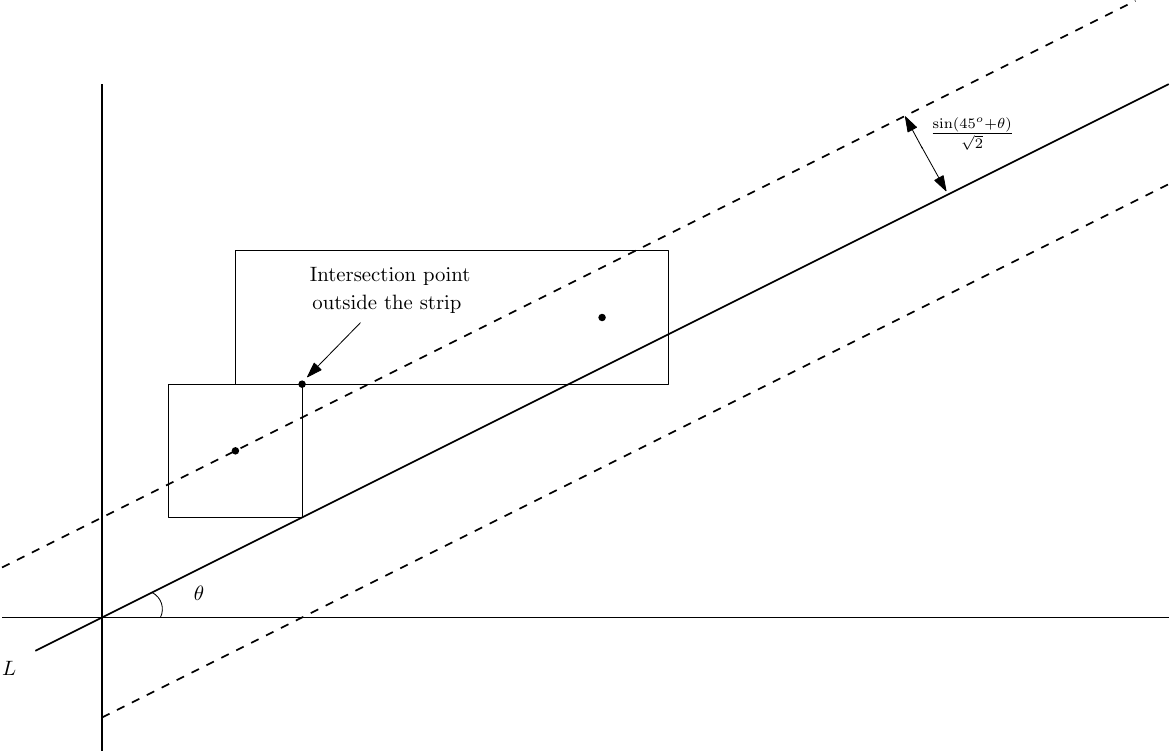}
    \caption{The rectangles have intersection points outside the strip.}
    \label{fig:pic12}
\end{figure}

\begin{proof}
    Consider Figure \ref{fig:pic12}. Here, the rectangles do not intersect inside the strip, but intersect outside. So, if we consider domination information of only inside the strip, the minimum dominating set will have size $2$. But here the optimal minimum dominating set has size $1$. 
\end{proof}

\begin{figure}[ht]
    \centering
    \includegraphics[width=0.9\linewidth]{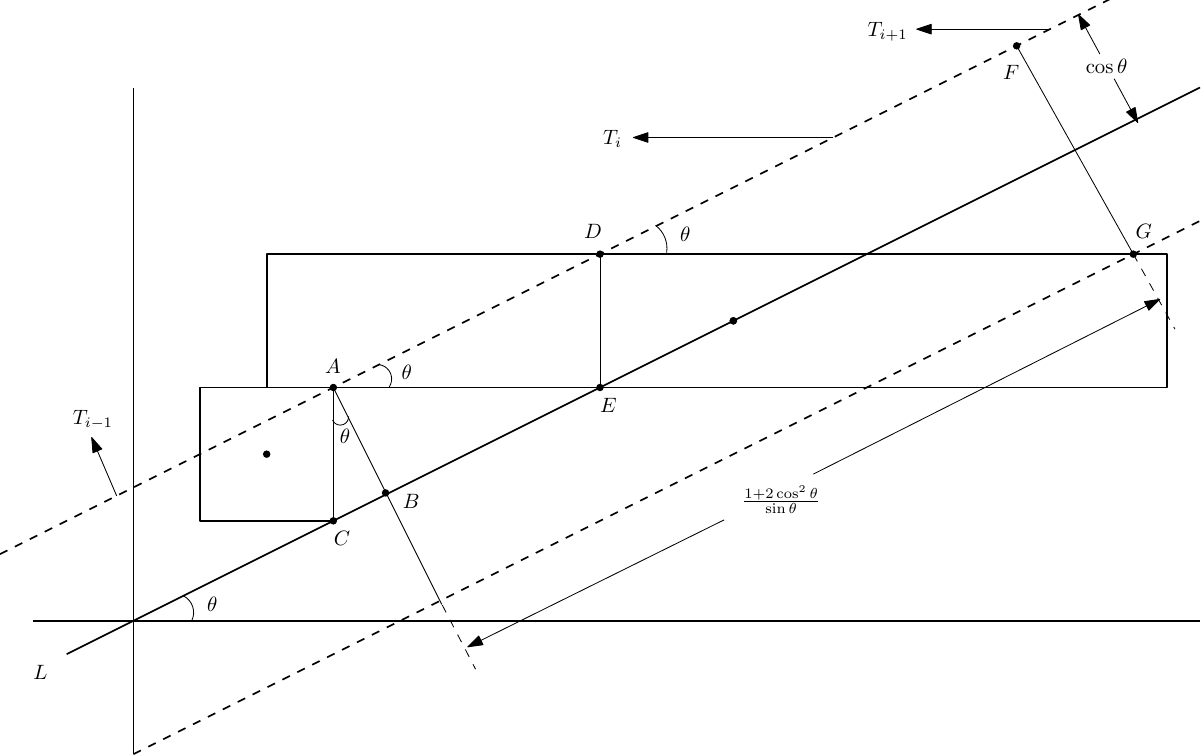}
    \caption{The modified strip $T$ shown above contains all the domination information.}
    \label{fig:pic13}
\end{figure}

To get rid of this problem, let us consider the strip $T$ shown in Figure \ref{fig:pic13}. The width of the strip $T$ is $2AB$.

Now, $\angle BAC=\theta$ and $AC=1$ implies $AB=\cos\theta$. So, this modified strip $T$ has width $2\cos\theta$. 

Also, to ensure that any rectangle can intersect with at most two boxes, we have to take the length of the longer side of each rectangular box $T_i$ to be $AF=AD+DF$.

Since $DE=1$ and $FG=2\cos\theta$, we have $AD=\frac{1}{\sin\theta}$ and $DF=\frac{2\cos\theta}{\tan\theta}$. Hence, $AF=\frac{1+2\cos^2\theta}{\sin\theta}$.

Therefore, the side lengths of each rectangular box $T_i$ will be $2\cos\theta$ and $\frac{1+2\cos^2\theta}{\sin\theta}$ in this case. Assume there are $m$ $(\leq 2n)$ non empty boxes. The following result uses a similar proof technique as Lemma \ref{lx1}.

\begin{lemma}\label{rl1}
    For each $i\in\{1,2,\dots,m\}$, to dominate all the rectangles whose centers lie in $T_i$, at most $\lceil{2\sqrt{2}\cos\theta}\rceil\cdot\lceil{\frac{\sqrt{2}(1+2\cos^2\theta)}{\sin\theta}}\rceil$ ($=\mathcal{P}$, say) number of rectangles of $R$ are required.
\end{lemma}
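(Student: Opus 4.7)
The plan is to imitate the strategy of Lemma~\ref{lx1}. There, the box of dimensions $\sqrt{2}\times 2\sqrt{2}$ was tiled by $8$ axis-aligned square cells of side $\tfrac{1}{\sqrt{2}}$; any two points in one such cell lie at mutual distance at most $1$, and since each input rectangle has width at least $1$, two rectangles whose (inscribed unit-square) centres are at distance at most $1$ must intersect. A single rectangle per non-empty cell therefore dominates every rectangle of $R_i$ whose centre lies in the same cell, giving the bound $8$.

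For the current, tilted setting, first I would just record the dimensions of $T_i$ derived immediately above the statement, namely $2\cos\theta$ and $\tfrac{1+2\cos^2\theta}{\sin\theta}$. Tiling $T_i$ by axis-aligned squares of side $\tfrac{1}{\sqrt{2}}$ (the largest side length whose diagonal is still $1$) needs
\[
\bigl\lceil 2\cos\theta\cdot\sqrt{2}\bigr\rceil \cdot \Bigl\lceil \tfrac{1+2\cos^2\theta}{\sin\theta}\cdot\sqrt{2}\Bigr\rceil
\;=\; \bigl\lceil 2\sqrt{2}\cos\theta\bigr\rceil \cdot \Bigl\lceil \tfrac{\sqrt{2}(1+2\cos^2\theta)}{\sin\theta}\Bigr\rceil \;=\; \mathcal{P}
\]
cells in total. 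Next I would select, in each non-empty cell, one rectangle of $R_i$ whose centre lies in that cell; applying the distance-at-most-$1$ argument above, this representative dominates every other rectangle of $R_i$ with centre in that cell. Summing over all cells that actually contain centres gives at most $\mathcal{P}$ rectangles dominating all of $R_i$.

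The step that requires the most care is transferring the ``distance at most $1$ implies intersection'' principle to the tilted strip. Unlike the unit-square case of Section~\ref{square}, the portions of input rectangles lying inside the modified strip are of the six types catalogued in \cite{new1}, so the centre we attach to each rectangle is the centre of its inscribed unit square rather than of the rectangle itself. I would check that this centre still witnesses intersection in the same way: any two inscribed unit squares whose centres are within distance $1$ must share at least one point, and therefore the (wider) input rectangles containing them must intersect. Once this observation is in place, the cell partition and the pigeonhole-style choice of one representative per cell yield the stated bound $\mathcal{P}$, completing the proof.
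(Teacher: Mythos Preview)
Your proposal is correct and follows exactly the approach the paper intends: the paper does not spell out a separate proof but simply states that Lemma~\ref{rl1} ``uses a similar proof technique as Lemma~\ref{lx1}'', namely tiling $T_i$ into square sub-boxes of side $\tfrac{1}{\sqrt{2}}$ (so diameter $1$) and using the fact that two rectangles whose inscribed-unit-square centres lie within distance $1$ must intersect. One wording point: your sub-boxes should be described as aligned with the sides of $T_i$ (i.e., with the strip direction) rather than ``axis-aligned'', since $T_i$ is tilted; the counting $\lceil 2\sqrt{2}\cos\theta\rceil\cdot\bigl\lceil \sqrt{2}(1+2\cos^2\theta)/\sin\theta\bigr\rceil$ only matches under that reading.
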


When the angle $\theta$ satisfies $0\degree<\theta<45\degree$, the quantity $\mathcal{P}$ defined in Lemma \ref{rl1} depends on the given value of $\theta$. An example is shown in Figure \ref{fig:pic14}. Using Lemma \ref{rl1} and applying similar methods as before, we have the following.

\begin{figure}[ht]
    \centering
    \includegraphics[width=0.66\linewidth]{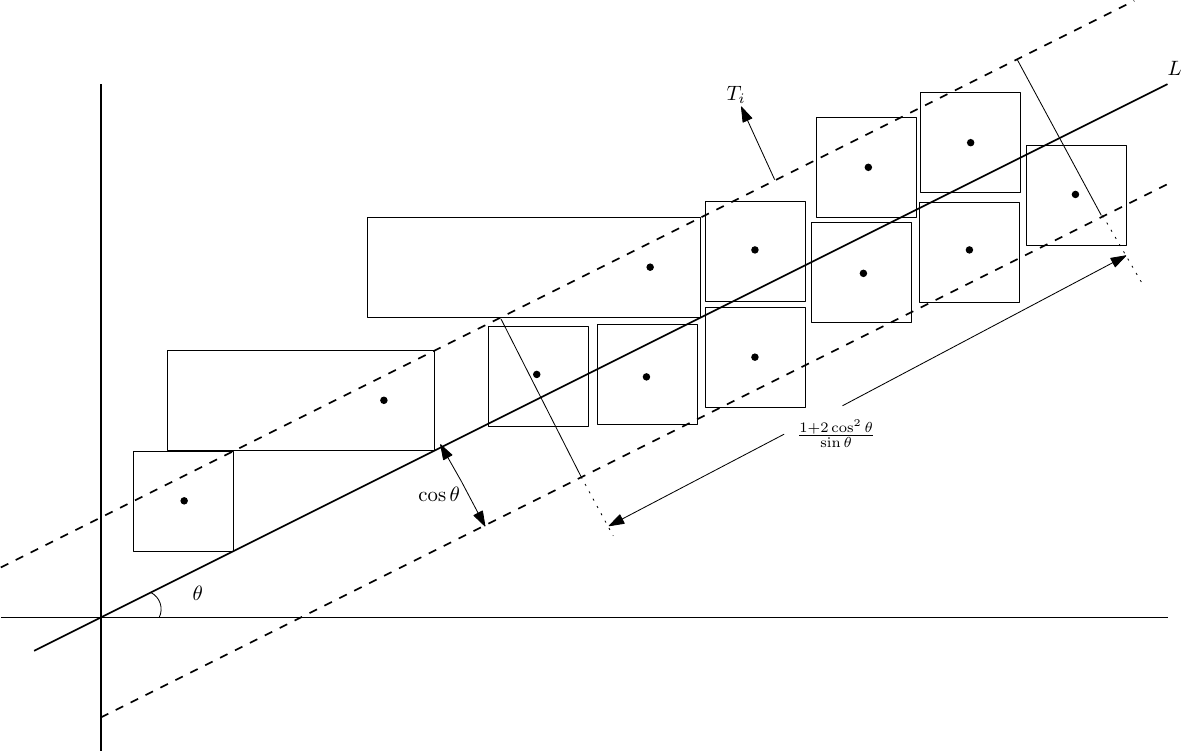}
    \caption{Here, the box $T_i$ contains centers of $10$ non-intersecting rectangles.}
    \label{fig:pic14}
\end{figure}

\begin{theorem}
    The maximum dominating $k$-set problem algorithm for the intersection graph of axis-parallel unit-height rectangles intersected by the straight line $L: y=mx$ with $m=\tan\theta$, $0\degree<\theta<45\degree$ runs in $O(kn^{3(3\mathcal{P}-1)+1})$ time.
\end{theorem}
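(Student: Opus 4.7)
The plan is to mirror the structure of the dynamic programming proofs given in Section \ref{square} and in Case (i) of Section \ref{rectangle}, upgrading the constants from the earlier box-size specific bounds ($4$, $11$, $8$, $23$) to the angle-dependent quantity $\mathcal{P}$ established in Lemma \ref{rl1}. First I would confirm two geometric facts that make the whole machinery go through: (a) by the choice of the longer side length $\frac{1+2\cos^2\theta}{\sin\theta}$ of each box, every input rectangle can meet at most two consecutive boxes $T_i, T_{i+1}$, hence any two rectangles whose centers lie in boxes $T_i$ and $T_j$ with $|i-j|\geq 3$ cannot intersect; and (b) Lemma \ref{rl1} gives at most $\mathcal{P}$ rectangles sufficient to dominate every rectangle centered inside a fixed box $T_i$. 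Together these correspond to Result \ref{rr}(a) in the square case, so the rest of the pipeline transfers.

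Next I would prove the structural lemma analogous to Lemma \ref{LLL} / Lemma \ref{rect}: there exists an optimal maximum dominating $k$-set $OPT$ for which every box $T_i$ contains at most $3\mathcal{P}-1$ centers of members of $OPT$. The exchange argument is identical to that of Lemma \ref{LLL}. If some leftmost offending box $T_j$ has $\geq 3\mathcal{P}$ centers of $OPT$ in it, then by Lemma \ref{rl1} we need only $\mathcal{P}$ of them to dominate all rectangles centered in $T_j$; the remaining $\geq 2\mathcal{P}$ centers can contribute to dominating $R_{j-1} \cup R_{j+1}$ only, and by pigeonhole at least $\mathcal{P}$ are devoted to one side, say $R_{j-1}$. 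Replace those $\geq \mathcal{P}$ members by $\leq \mathcal{P}$ rectangles of $R_{j-1}$ (guaranteed by Lemma \ref{rl1}) that dominate all of $R_{j-1}$; this does not increase $|OPT|$ and does not decrease the dominated neighborhood size. Propagating leftward to $T_1$ and then possibly compressing $T_1$ itself reduces $|OPT|$, contradicting optimality.

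With that bound in hand, I would set up the dynamic programming exactly as in Section \ref{rectangle}, Case (i): the state $N(l_1, l_2, R_{i-1}^{l_3}, R_i^{l_4})$ records the best dominated size over $(l_1+l_2+l_3+l_4)$-sets in which subsets of sizes $l_3, l_4$ have been fixed inside $T_{i-1}, T_i$, a subset of size $l_2$ is chosen in $T_{i-2}$, and the remaining $l_1$ lie in $\bigcup_{j\leq i-3} R_j$. By the structural lemma $l_2, l_3, l_4 \leq 3\mathcal{P}-1$, while $l_1 \leq k$. The recursion
\[
N(l_1,l_2,R_{i-1}^{l_3},R_i^{l_4}) = \max\bigl\{ N(l_1-a,a,R_{i-2}^{l_2},R_{i-1}^{l_3}) + d(R_i^{l_4}) - d_c(R_{i-2}^{l_2}\cup R_{i-1}^{l_3}, R_i^{l_4}) \bigr\},
\]
taken over $R_{i-2}^{l_2} \subseteq R_{i-2}$ with $|R_{i-2}^{l_2}|=l_2$ and $0 \leq a \leq \min\{l_1, 3\mathcal{P}-1\}$, is justified verbatim as before: only centers in $T_{i-3}, T_{i-2}, T_{i-1}$ can share dominated rectangles with a newcomer in $T_i$, so all earlier interactions are already encoded in the previous state.

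For the time bound I would count: a single evaluation requires scanning $R_{i-2}^{l_2}$, giving $O(n^{3\mathcal{P}-1})$ work; for each of the $O(n)$ boxes the number of states is $O(k) \cdot n^{3\mathcal{P}-1} \cdot n^{3\mathcal{P}-1} = O(k n^{2(3\mathcal{P}-1)})$; summed over boxes this is $O(k n^{2(3\mathcal{P}-1)+1})$ states, each evaluated in $O(n^{3\mathcal{P}-1})$ time, for a total of $O(k n^{3(3\mathcal{P}-1)+1})$; the final maximization over the last box costs only $O(n^{2(3\mathcal{P}-1)})$ and is absorbed. The step I expect to require the most care is justifying (a) above for the oblique line with the specific box dimensions $2\cos\theta \times \frac{1+2\cos^2\theta}{\sin\theta}$ throughout the range $0\degree<\theta<45\degree$ and then pushing the exchange argument through without the rectangles' widths (which are unbounded above) disturbing the locality; everything else is a mechanical parameter substitution in the previously verified framework.
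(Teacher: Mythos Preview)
Your proposal is correct and is precisely the approach the paper takes: the paper states only that the theorem follows ``using Lemma \ref{rl1} and applying similar methods as before,'' and your write-up spells out exactly those similar methods, correctly deriving the per-box bound $3\mathcal{P}-1$ (the analogue of $11=3\cdot4-1$ and $23=3\cdot8-1$) via the same exchange argument and then reusing the identical dynamic program and cost accounting. Your flagged concern about unbounded rectangle widths is handled in the paper exactly as you anticipate, namely by choosing the box side length $\frac{1+2\cos^2\theta}{\sin\theta}$ so that each rectangle's portion inside the strip meets at most two consecutive boxes.
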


\begin{remark}
    Observe that when $\theta=0\degree$, the intersection graph of the input rectangles is actually an interval graph where the intervals are the projections of the rectangles on the straight line $L$. Hence, in this case, the maximum dominating $k$-set can be found in $O(n^2k)$ time.
\end{remark}

\begin{remark}
    When $L$ is any other arbitrary straight line, the problem can be solved similarly by considering the angle of $L$ with either the $x$-axis or the $y$-axis.
\end{remark}

\section{Disks intersected by a straight line}\label{di}

\subsection{Case (i): unit disks}
Here, the line $L$ can be any arbitrary straight line in the plane. The construction of strip $T$, the dynamic programming algorithm, the time complexity, and the correctness results are the same as the unit square case of Section \ref{square}. Figure \ref{fig:pic15} illustrates the problem for unit disks.

\begin{figure}[ht]
    \centering
    \includegraphics[width=0.9\linewidth]{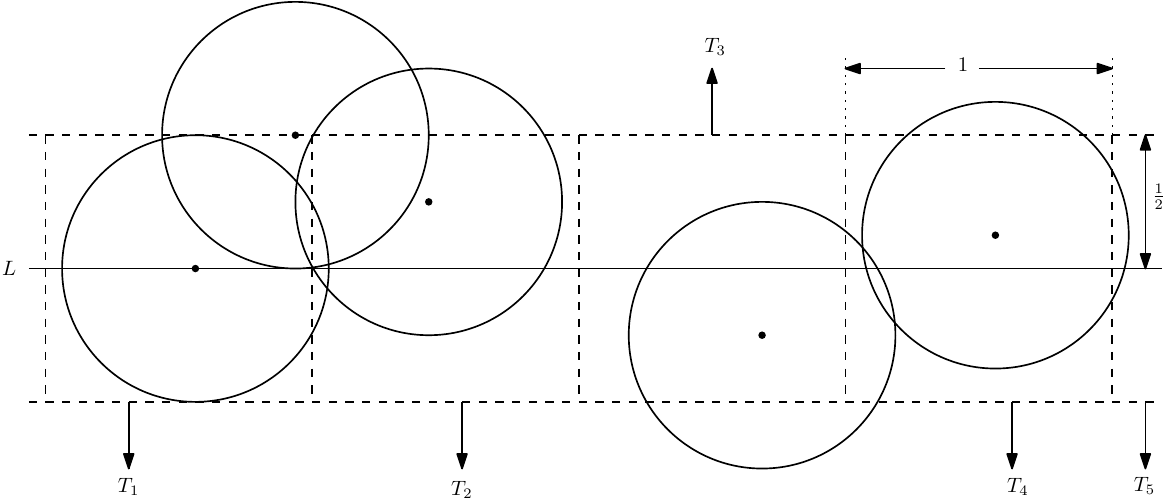}
    \caption{Demonstration of the problem for unit disks}
    \label{fig:pic15}
\end{figure}

\begin{theorem}
    The maximum dominating $k$-set algorithm for the intersection graph of unit disks where a straight line intersects all the disks runs in $O(kn^{34})$ time.
\end{theorem}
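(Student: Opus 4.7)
The plan is to mirror the approach of Section~\ref{square} almost verbatim, replacing axis-parallel unit squares by unit disks; the rotational symmetry of disks actually makes the geometric bookkeeping slightly cleaner. Since the line $L$ intersects every unit disk, every disk center lies within distance $1$ of $L$, so there is a strip $T$ of width $2$ symmetric about $L$ that contains every center. I would partition $T$, measuring along the direction of $L$, into square boxes $T_1,T_2,\ldots,T_m$ of side length $2$. Because each disk has diameter $2$, its center lies in exactly one box and the disk itself can meet at most two consecutive boxes, so $m\le 2n$. Denote by $S_i\subseteq S$ the disks centered in $T_i$.

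The counterpart of Result~\ref{rr} then follows from a simple sub-box argument. Subdivide each $T_i$ into four unit sub-boxes. Two points in a single unit square are at Euclidean distance at most $\sqrt{2}<2$, and two unit disks intersect if and only if their centers lie within distance $2$, so any disk whose center falls in a sub-box dominates every disk centered in that sub-box. Choosing one disk per non-empty sub-box gives at most $4$ members of $S$ that dominate all of $S_i$; as the disks meeting $T_i$ have centers in $T_{i-1}\cup T_i\cup T_{i+1}$, at most $12$ disks of $S$ dominate every disk intersecting $T_i$.

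With these constants in hand, the exchange argument proving Lemma~\ref{LLL} carries over unchanged: in some optimum $OPT$ each box contains at most $11$ members of $OPT$, since whenever more than $11$ are present in the leftmost offending box one can swap at least $4$ of them out, replacing them by at most $4$ centers in the current box together with at most $4$ centers from an adjacent box, and push the change left. I would then use exactly the dynamic programming of Section~\ref{square}: state $N(l_1,l_2,S_{i-1}^{l_3},S_i^{l_4})$ with $l_2,l_3,l_4\le 11$ and $l_1\le k$, recursion using the same $d(\cdot)$ and $d_c(\cdot,\cdot)$ functions (which now just compare pairwise center distances to~$2$). The same counting — $O(n^{11})$ per state evaluation, $O(kn^{22})$ tuples per box, $O(n)$ boxes, plus a final $O(n^{22})$ maximization — yields the $O(kn^{34})$ bound.

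The main thing to verify, and the only point where one should pause rather than simply invoke Section~\ref{square}, is that a disk in $T_i$ cannot interact with one in $T_{i-3}$ through a shared dominated disk, so the recursion only needs to look back two boxes. This is immediate: the coverage region of a disk centered in $T_i$ lies in the longitudinal range $[2i-1,2i+3]$, while that of a disk centered in $T_{i-3}$ lies in $[2i-7,2i-3]$; these are disjoint, so no third disk can be dominated by both. The $d_c$ term therefore captures all the overlap the recursion needs to subtract, and the $O(kn^{34})$ time bound transfers without change.
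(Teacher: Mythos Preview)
Your proposal is correct and takes essentially the same approach as the paper: the paper's own ``proof'' of this theorem is the single sentence that the construction of the strip $T$, the dynamic programming, the time complexity, and the correctness are the same as in the unit-square case of Section~\ref{square}, together with Figure~\ref{fig:pic15}. You have simply written out the details the paper leaves implicit --- the strip width, the $2\times2$ box decomposition, the $4$-sub-box argument replacing Result~\ref{rr}(a), and the locality check --- and these match what the paper would need (and what its Section~\ref{di}, Case~(ii) spells out in the general-radius setting, specialised to $\mathcal{D}=\delta$).
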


\subsection{Case (ii): disks of arbitrary diameter}
Let $n$ input disks in the plane be given, and a straight line intersects all the input disks. We first find the diameters of the largest and the smallest disk; let these be $\mathcal{D}$ and $\mathcal{\delta}$, respectively. 

\begin{figure}[ht]
    \centering
    \includegraphics[width=0.9\linewidth]{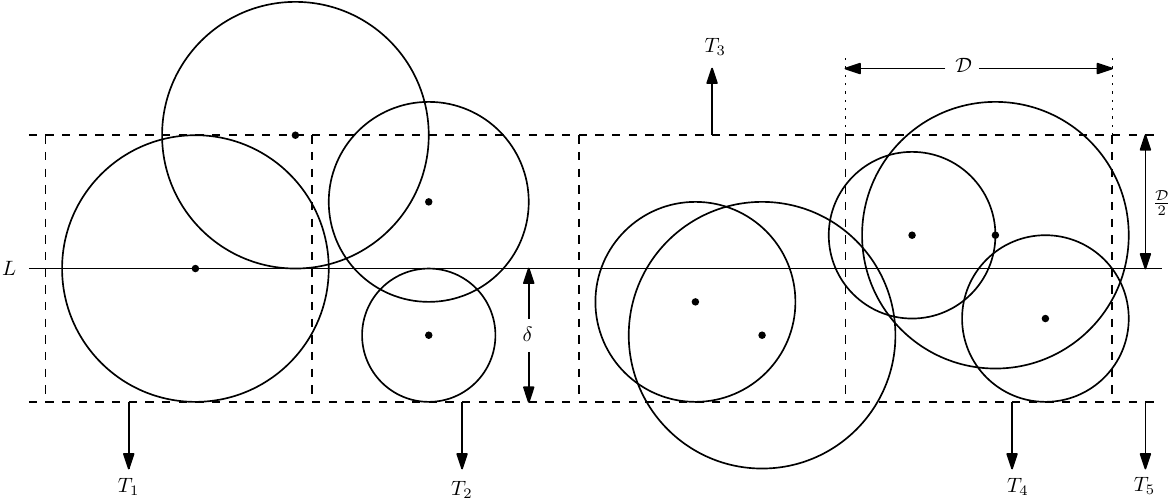}
    \caption{Demonstration of the problem for disks intersected by a straight line}
    \label{fig:pic16}
\end{figure}

In this case, the width of strip $T$ is taken to be $\mathcal{D}$. The strip $T$ is then divided into square-sized boxes $T_i$ of side-length $\mathcal{D}$ (See Figure \ref{fig:pic16}).

Each box $T_i$ is divided into $(\lceil{\frac{\sqrt{2}\mathcal{D}}{\mathcal{\delta}}}\rceil)^2$ sub-boxes, where each sub-box is square-sized and has equal side-length. By construction, each sub-box cannot contain the centers of two or more non-intersecting disks. Hence, by using techniques similar to those in previous sections, we have the following:

\begin{theorem}
    The algorithm for maximum dominating $k$-set problem for the intersection graph of disks where a straight line intersects all the disks runs in $O(kn^{3\{3(\lceil{\frac{\sqrt{2}\mathcal{D}}{\mathcal{\delta}}}\rceil)^2-1\}+1})$ time.
\end{theorem}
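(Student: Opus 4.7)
The plan is to carry out a direct adaptation of the dynamic programming framework developed in Section~\ref{square} for unit squares, replacing the fixed geometric constants ($4$, $12$, and $11$) by quantities that depend on the ratio $\mathcal{D}/\delta$. First I would verify that the strip $T$ captures all pairwise interactions: because $L$ meets every input disk and the largest diameter is $\mathcal{D}$, every disk center lies within perpendicular distance $\mathcal{D}/2$ of $L$, so the strip of width $\mathcal{D}$ about $L$ contains every center and every pairwise intersection point of input disks. Partitioning $T$ into square boxes $T_1, \ldots, T_m$ of side $\mathcal{D}$, each disk's footprint meets at most two consecutive boxes (its diameter is at most $\mathcal{D}$), so $m \leq 2n$; let $S_i$ denote the set of input disks whose centers lie in $T_i$.

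The key geometric lemma, analogous to Result~\ref{rr}(a), is that with $P = \bigl(\lceil \sqrt{2}\mathcal{D}/\delta \rceil\bigr)^2$, at most $P$ input disks suffice to dominate all of $S_i$. Each of the $P$ sub-boxes has side length $\mathcal{D}/\lceil \sqrt{2}\mathcal{D}/\delta \rceil$ and diagonal at most $\sqrt{2}\mathcal{D}/\lceil \sqrt{2}\mathcal{D}/\delta \rceil \leq \delta$, so any two input disks with centers in a common sub-box have center-to-center distance at most $\delta$, which is bounded by the sum of their radii (each at least $\delta/2$); hence they intersect, and picking one input disk per non-empty sub-box dominates all of $S_i$. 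Consequently, at most $3P$ input disks dominate every disk intersecting $T_i$, the analogue of Corollary~\ref{corx1}.

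Next I would prove the structural analogue of Lemma~\ref{LLL} and Lemma~\ref{rect}: some optimal maximum dominating $k$-set $OPT$ places the centers of at most $3P-1$ of its members in any single box. This follows by the same leftward-propagation swap as in Lemma~\ref{LLL}: if $T_j$ is the leftmost offending box with $\geq 3P$ hosted members of $OPT$, at most $P$ of them suffice to dominate $S_j$ and, by pigeonhole, at least $P$ of the remaining $2P$ help dominate $S_{j-1}$ (or symmetrically $S_{j+1}$); these can be replaced by $\leq P$ disks drawn from $S_{j-1}$ via the sub-box covering without losing any previously dominated disk, and iterating the exchange leftwards strictly shrinks $|OPT|$ at the leftmost step, contradicting optimality.

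With these bounds in hand, the dynamic programming of Section~\ref{square} transfers verbatim: the state is $(l_1, l_2, S_{i-1}^{l_3}, S_i^{l_4})$ with $l_2, l_3, l_4 \in \{0, 1, \ldots, 3P-1\}$ and $l_1 \in \{0, 1, \ldots, k\}$, and the three-case recursion handles whether the most recently chosen batch lies strictly before $T_{i-2}$, in $T_{i-2}$, or in $T_{i-1}$. Looking back only two boxes is justified because a disk centered in $T_i$ has radius at most $\mathcal{D}/2$, so it intersects only disks whose centers lie in $T_{i-1} \cup T_i \cup T_{i+1}$; hence prior selections centered in $T_{i-3}$ or earlier cannot double-count any disk newly dominated by $S_i^{l_4}$. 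The time bound then follows by multiplying the $O(n)$ boxes, the $O(k \cdot n^{2(3P-1)})$ tuples per box, and the $O(n^{3P-1})$ work per recursion evaluation (iteration over $S_{i-2}^{l_2}$), yielding the claimed $O(k n^{3(3P-1)+1})$ running time. The main obstacle is the sub-box geometric lemma together with its use in the leftward-propagation swap; this is the only step whose governing constant genuinely varies with $\mathcal{D}/\delta$, while every remaining argument is a notational re-labeling of the unit-square proof.
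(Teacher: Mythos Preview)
Your proposal is correct and follows essentially the same approach as the paper: the paper merely states that the strip has width $\mathcal{D}$, the boxes have side $\mathcal{D}$, each box is subdivided into $(\lceil \sqrt{2}\mathcal{D}/\delta\rceil)^2$ sub-boxes so that no sub-box contains centers of two disjoint disks, and then invokes ``techniques similar to those in previous sections,'' which is exactly the analogue of Result~\ref{rr}, Lemma~\ref{LLL}, and the DP of Section~\ref{square} that you have spelled out. Your explicit derivation of the running time by tracking the exponent $3P-1$ through the tuple and recursion counts is precisely what the paper's formula $O(kn^{3\{3(\lceil\sqrt{2}\mathcal{D}/\delta\rceil)^2-1\}+1})$ encodes.
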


\section{Conclusion} \label{conclusion} 
This paper presents some complexity results related to the partial domination problem. Polynomial-time algorithms are proposed for the maximum dominating $k$-set problem for interval graphs and axis-parallel unit square intersection graphs, where a straight line intersects the input unit squares. A polynomial-time algorithm is also given for the intersection graph of unit disks intersected by a straight line. For arbitrary disks intersected by a straight line, a parametrized algorithm is proposed. Improving the time complexities of these algorithms is surely a natural problem. It is also interesting to identify the intersection graph of some other types of geometric objects for which the problem can be solved in polynomial time.

\end{document}